\newif\ifllncsloaded
\newif\ifieetranloaded
\newif\ifacmartloaded
\newif\ifarxiveloaded
\newcolumntype{H}{>{\setbox0=\hbox\bgroup}c<{\egroup}@{}}
\newcolumntype{R}{>{\raggedleft\arraybackslash}X}
\newcolumntype{D}{>{$}c<{$}}
\newcolumntype{Q}{>{$}l<{$}}
\newcommand{\itodo}[2][]{\todo[inline,caption={2do}, #1]{%
    \begin{minipage}{\textwidth-4pt}#2\end{minipage}}}
\theoremstyle{acmplain}
\newtheorem{property}{Property}
\theoremstyle{remark}
\newcommand{\aka}		{{\itshape	    aka\/}\xspace}
\newcommand{\cf}		{{\itshape	      cf.}\xspace}
\newcommand{\eg}		{{e.g.,}\xspace}
\newcommand{\ie}		{{i.e.,}\xspace}
\newcommand{\wrt}		{{w.r.t.}\xspace}
\newcommand\B{\ensuremath{\mathcal B}\xspace}%
\providecommand\G{\ensuremath{\mathcal G}\xspace}%
\newcommand\Q{\ensuremath{\mathcal Q}\xspace}%
\renewcommand\S{\ensuremath{\mathcal S}\xspace}%
\newcommand\X{\ensuremath{\mathcal X}\xspace}%
\newcommand\Z{\ensuremath{\mathcal Z}\xspace}%
\newcommand\BB{\ensuremath{\mathbb B}\xspace}%
\newcommand\LL{\ensuremath{\mathbb L}\xspace}%
\newcommand{\ReaX}{{\sffamily ReaX}\xspace}
\newcommand{\Java}{{\sffamily Java}\xspace}
\newcommand{\soot}{{\sffamily soot}\xspace}
\newcommand{\KeY}{{\scshape   KeY}\xspace}
\newcommand{\IFSPEC}{\textsc{IFSpec}\xspace}
\newcommand\IFSpec\IFSPEC
\newcommandx\adjustedbg[3][1=0pt,2=white]{%
  \setlength{\fboxsep}{#1}%
  \colorbox{#2}{#3}%
}
\newcommand\cv[1]{\ensuremath{\mathtt{#1}}\xspace}%
\definecolor{bleudefrance}{rgb}{0.19, 0.55, 0.91}
\definecolor{burgundy}{rgb}{0.5, 0.0, 0.13}
\definecolor{darkcandyapplered}{rgb}{0.64, 0.0, 0.0}
\colorlet{hdval}{bleudefrance!80!black}
\colorlet{hvar}{darkcandyapplered}
\newcommand\hdval[1]{\ensuremath{\textcolor{hdval}{#1}}\xspace}
\newcommand\hd{\hdval{\mathsf{hd}}}
\newcommand\hdeep{\hdval{\mathsf{deep}}}
\newcommand\hshal{\hdval{\mathsf{shal}}}
\newcommand\hsyma{\hdval{\mathsf{syma}}}
\newcommand{\stms}{\ensuremath{\mathbb{S}}\xspace}
\newcommand\ipstms{\ensuremath{\mathbb{S'}}\xspace}%
\newcommand\lbl[1]{\ensuremath{\mathit{#1}}\xspace}%
\newcommand\lambdaone[2][]{\ensuremath{\ifthenelse{\isempty{#2}}{}{#1（#2）}}}
\newcommand\lambdatwo[3][]{\ensuremath{\ifthenelse{\isempty{#1#2}}{}{#1（#2, #3）}}}
\newcommand\llambdaone[1]{\ensuremath{\ifthenelse{\isempty{#1}}{}{(#1)}}}
\newcommand\llambdatwo[2]{\ensuremath{\ifthenelse{\isempty{#1#2}}{}{(#1, #2)}}}
\newcommand\subst[3]{\ensuremath{#1\left[#2 ⟼ #3\right]}}%
\newlength\myrulespace%
\newcommand\rname[1]{\textsc{#1}\xspace}%
\newif\ifflushrulesleft
\newcommand{\tuple}[1]{\ensuremath{\left(#1\right)}\xspace}%
\newcommand{\semloc}[1]{\tuple{#1}}%
\newcommand{\myfreerule}[3]{\ensuremath{\rname{#1}\dfrac{#2}{#3}}}
\newenvironment{freeruleset}{
}{}
  \newenvironment{ruleset}{
    \align}{\endalign}
  \newcommand{\myrule}[3]{\rname{#1}&\dfrac{#2}{#3\hfill}\nonumber}
  \newenvironment{ruleset}{
    \centering}{\\}
  \newcommand{\myrule}[3]{\myfreerule{#1}{#2}{#3}}
\newcommand{\trans}[1]{%
  \raisebox{-.5ex}{\ensuremath{\xrightarrow{#1}}}
}
\newcommand{\myoverline}[2][3]{%
  {}\mkern#1mu\overline{\mkern-#1mu#2}}
\newcommand{\myoverrightarrow}[2][1]{
  \mkern#1mu%
  \overscriptrightarrow{
    \mkern-#1mu#2
  }}
\newcommand\PrimFields[1]{\ensuremath{\mathsf{PFields}}\xspace}%
\newcommand\this{\ensuremath{\mathit{this}}\xspace}%
\newcommand\MethBody[1]{\ensuremath{\stms_{#1}}\xspace}%
\newcommand\MethRefs[1]{\ensuremath{\mathit{Refs}_{#1}}\xspace}%
\newcommand\MethDeps[1]{\ensuremath{\mathsf{Deps}\lambdaone{#1}}\xspace}%
\newcommand\TargetName[1]{\ensuremath{\mathsf{target}_{#1}}\xspace}%
\newcommand\Target[2]{\ensuremath{\TargetName{#1}(#2)}\xspace}%
\newcommand\MethFormalArgs[1]{\ensuremath{\mathit{Args}_{#1}}\xspace}%
\newcommand\elts{\red{\ensuremath{y}}\xspace}%
\newcommand\meth{\ensuremath{m'}\xspace}%
 \newcommand\s[1]{\ensuremath{⟪#1⟫}\xspace}%
\newcommand\stm{\ensuremath{\mathit{a}}\xspace}%
\newcommand\cgoto{\ensuremath{\mathtt{goto}}\xspace}%
\newcommand\cif{\ensuremath{\mathtt{if}}\xspace}%
\newcommand\creturn{\ensuremath{\mathtt{return}}\xspace}%
\newcommand\coutput[1]{\ensuremath{\mathtt{output}_{#1}}\xspace}%
\newcommand\cnew{\New}%
\newcommand\Null{\ensuremath{\mathtt{null}}\xspace}%
\newcommand\New{\ensuremath{\mathtt{new}}\xspace}%
\newcommand\sep{~|~}
\newcommand\lits{\red{\ensuremath{w}}\xspace}%
\newcommandx\SymbolicAbstractHeapDom[2][1=\hd]{\ensuremath{\mathsf{HeapDom}_{#1}}\xspace}
\newcommandx\Rels{\ensuremath{\mathcal{X}}\xspace}%
\newcommandx\HeapDomRelations[1][1=\hd]{\ensuremath{\Rels_{\mathsf{Sen}}}\xspace}%
\newcommandx\HeapDomConstRelations[1][1=\hd]{\ensuremath{\Rels_{\mathsf{Insen}}}\xspace}%
\newcommandx\ConcreteHeapDom[1]{\ensuremath{\SymbolicAbstractHeapDom[\concrete]{#1}}\xspace}
\newcommand\HeapDom[1]{\ensuremath{\mathbb{H}_{#1}}\xspace}%
\newcommand\Transformers[1]{\ensuremath{\mathbb{T}_{#1}}\xspace}%
\newcommandx\TypeAnalysis[3][1=\hd]{\ensuremath{\mathsf{CanRelate}\lambdaone{#3}}\xspace}%
\newcommandx\DeepHeapAbstractDom[1]{\SymbolicAbstractHeapDom[\hdeep]{#1}}
\newcommandx\DeepHeapDom[1]{\HeapDom[\hdeep]{#1}}
\newcommandx\DeepTransformers[1]{\Transformers[\hdeep]{#1}}
\newcommand\HeapCappedJoinProj[3]{\ensuremath{\mathsf{Upgrade}(#1,#2,#3)}\xspace}%
\newcommand\hv[1]{\ensuremath{\textcolor{hvar}{#1}}\xspace}
\newcommand\hvar{\ensuremath{\hv{\mathfrak h}}\xspace}%
\newcommand\fvar{\ensuremath{\hv{\mathfrak f}}\xspace}%
\newcommand\pval{\ensuremath{\mathtt{Prm}}\xspace}%
\newcommand\hlvlH[2][\mathclap{\phantom{\hvar}}]{\ensuremath{\ifthenelse{\isempty{#1}}{%
\newcommand\hlvl[2][\mathclap{\phantom{\hvar}}]{\ensuremath{\ifthenelse{\isempty{#1}}{%
\newcommand\flvl[3][]{\hlvl[#1]{#2}}
\newcommand\chlvl[2][\mathclap{\phantom{\hvar}}]{\hlvl[#1]{\cv{#2}}}%
\newcommand\GenericHLvlVars[1][\sHeap]{\ensuremath{\Varset
_{\hlvl L}}\xspace}%
\newcommand\GenericLvlVars[1][\sHeap]{\ensuremath{\Varset_{\plvl L}}\xspace}%
\newcommand\GenericRelSymb{\ensuremath{\smallfrown}\xspace}%
\newcommand\GenericRel[2]{\ensuremath{#1 \GenericRelSymb #2}\xspace}%
\newcommand\PrimVars{\ensuremath{P}\xspace}%
\newcommand\Varset{\ensuremath{\mathbb V}\xspace}%
\newcommand\Consset{\ensuremath{\mathbb C}\xspace}%
\newcommand\GenericRelTauto[1][\sHeap]{\ensuremath{\Consset
_{\tt}}\xspace}%
\newcommand\GenericRelUnsat[1][\sHeap]{\ensuremath{\Consset
_{\ff}}\xspace}%
\newcommandx\GenericUpdateRels[3][1=\hd]{\ensuremath{\mathsf{UpdHpRel}(#2)}\xspace}%
\newcommandx\NoHeapEffect[3][1=\hd]{\ensuremath{⦇#2⦈\xspace}}%
\newcommandx\OneHeapEffect[4][1=\hd]{\ensuremath{⦇#2, \uparrow#3⦈}\xspace}%
\newcommandx\HeapUpgrade[4][1=\hd]{\ensuremath{⦇#2 ⬿\,#3⦈_{#4}^{\hdval{#1}}}\xspace}%
\newcommandx\GenericHeapInit[3][1=\hd,2=\sHeap]{\NoHeapEffect[#1]{\mathsf{null}~#3}{#2}}%
\newcommandx\HeapInit[2][1=\sHeap]{\GenericHeapInit[\hd][#1]{#2}}
\newcommandx\DeepInit[2][1=\sHeap]{\GenericHeapInit[\hdeep][#1]{#2}}
\newcommandx\ShalInit[2][1=\sHeap]{\GenericHeapInit[\hShal][#1]{#2}}
\newcommandx\NullRef[3][1=\hd]{\NoHeapEffect[#1]{#2 = \Null}{#3}}%
\newcommandx\CopyRef[4][1=\hd]{\NoHeapEffect[#1]{#2 = #3}{#4}}%
\newcommand\LoadRef[4]{\NoHeapEffect{#1 = #2.#3}{#4}}%
\newcommand\CopyRefOp[2]{\ensuremath{#1 = #2}}%
\newcommand\LoadRefOp[3]{\ensuremath{#1 = #2.#3}} %
\newcommand\NewRefOp[2]{\ensuremath{#1 = \New}\xspace}%
\newcommand\NewRef[4]{\OneHeapEffect{\NewRefOp{#1}{#2}}{#3}{#4}}%
\newcommand\StorePrimOp[3]{\ensuremath{#1.#2 ⬿}\xspace}%
\newcommand\StoreRefOp[3]{\ensuremath{#1.#2 = #3}\xspace}%
\newcommand\StorePrim[5]{\OneHeapEffect{\StorePrimOp{#1}{#2}{#3}}{#4}{#5}}%
\newcommand\StoreRef[5]{\OneHeapEffect{\StoreRefOp{#1}{#2}{#3}}{#4}{#5}}%
\newcommand{\oset}[2]{%
	{\mathop{#2}\limits^{\vbox to -0.5 \ex@{\kern-\tw@\ex@
				\hbox{\scriptsize #1}\vss}}}}
\newcommand\AliasRelName{\ensuremath{∼}\xspace}%
\newcommand\TransFieldAliasRelSymbol{\ensuremath{\stackrel{.*}{\hookrightarrow}}\xspace}%
\newcommand\AliasRel[3][]{\ensuremath{{\oset{\AliasRelName}{p}}_{#2,#3}}\xspace}%
\newcommand\cAliasRelProp[3][]{\AliasRel[#1]{\cv{#2}}{\cv{#3}}}
\newcommand\TransFieldAliasRel[3][]{\ensuremath{{\overset{{\TransFieldAliasRelSymbol}}{p}}_{#2,#3}}\xspace}%
\newcommand\ConnectRelSymbol{\ensuremath{\stackrel{*}{\backsim}}\xspace}%
\newcommand\ConnectRel[3][]{\ensuremath{{#2{\ConnectRelSymbol}^{#1}\ifthenelse{\isempty{#1}}{}{\!\!}#3}}\xspace}%
\newcommand\jsone{\ensuremath{\mathsf{njb}}\xspace}%
\newcommand\newloc[1]{\semloc{#1, \jsone}}%
\newcommand\guard{\ensuremath{\mathit{guard}}\xspace}%
\newcommand\plvl[1]{\ensuremath{\mathop{\myoverline[1]{#1}}}\xspace}%
\newcommand\cplvl[1]{\plvl{\cv{#1}}}%
\newcommand\pc{\ensuremath{\mathit{pc}}\xspace}%
\newcommand\Chkpt[1]{\guard}%
\newcommand\uVar{Υ\xspace}%
\newcommand\assignv{\assign_{\uVar}}%
\newcommand\nomlvl[1]{\ensuremath{\lfloor#1\rfloor_{\uVar}}\xspace}%
\newcommand\TBrch[1]{\ensuremath{\mathrm{Brch}\lambdaone{#1}}\xspace}%
\newcommand\Endua[1]{\ensuremath{\mathrm{End\mbox-ua}
  }\xspace}%
\newcommand\sVar{\ensuremath{\mathit{ret\mbox-var}}\xspace}%
\newcommand\sSpec[1]{\ensuremath{\mathsf{Ret}_{#1}}\xspace}%
\newcommand\retVar[1]{\ensuremath{#1^{\mathit{r}}}\xspace}%
\newcommand\regionName[1]{\ensuremath{\mathsf{CDR}_{#1}}\xspace}%
\newcommand\region[2]{\ensuremath{\regionName{#1}(#2)}\xspace}%
\newcommand\GotoRule{\rname{Goto}}%
\newcommand\OutputRule{\rname{Sink}}
\newcommand\AssignRule{\rname{Assign}}%
\newcommand\BranchRule{\rname{Branch}}%
\newcommand\ReturnRule{\rname{Return}}%
\newcommand\TargetSummaries[2][]{\ensuremath{\mathit{Sums}_{#2}}\xspace}%
\newcommand{\mode}{\uVar}
\newcommand{\low}{\ensuremath{\bot}\xspace }
\newcommand{\high}{\ensuremath{\top}\xspace }
\newcommand{\qstate}{q}
\newcommand\evalStm[1]{\ensuremath{
   \mathsf{Apply}_{#1}
  }\xspace}
\newcommand\NewFrame[1][]{\ensuremath{\mathrm{NewFrame}\lambdaone{#1}}\xspace}
\newcommand{\mStates}{\ensuremath{\Gamma}}
\newcommand{\mState}{\ensuremath{\gamma}\xspace}
\newcommand\concrete{\ensuremath{\mathsf{crt}}\xspace}
\newcommand{\sStates}{\ensuremath{\overline{\mStates}\xspace}}
\newcommand{\sState}{\ensuremath{\overline{\mState}}\xspace}
\newcommand{\sHeap}{\hvar}
\newcommand{\isState}[1]{\ensuremath{\sStates_{0}}\xspace}
\newcommand{\smTrans}{\ensuremath{\to}}
\newcommand{\smStates}{\ensuremath{\mathcal{Q}}\xspace}
\newcommand{\mCallRule}{\textsc{m-Call}\xspace}
\newcommand{\mStmRule}{\textsc{m-Stm}\xspace}
\newcommand{\mExitRule}{\textsc{m-Exit}\xspace}
\newcommand\evalStms[1]{\ensuremath{
		\mathsf{Apply^*} {#1}
	}\xspace}
\newcommand{\mStack}{\ensuremath{\eta}\xspace}
\newcommand\methFrame[1][m]{\ensuremath{\mathsf{f}_{#1}}\xspace}%
\newcommand\mframe[4][m]{\ensuremath{\tuple{#2, #3,#4}_{#1}}\xspace}%
\newcommand\mframeX[5]{\ensuremath{〈#2,#3,#1,#4,#5〉}\xspace}%
\DeclareRobustCommand{\qed}{%
  \ifmmode           
  \else \leavevmode\unskip\penalty9999 \hbox{}\nobreak\hfill
  \fi
  \quad\hbox{\qedsymbol}}
\newenvironment{proofSketch}[1][Proof Sketch:]{\par
  \normalfont
  \topsep6\p@\@plus6\p@ \trivlist
\item[\hskip\labelsep\itshape
  #1.]\ignorespaces
}{%
  \qed\endtrivlist
}
\newcommand{\ttrans}[2]{\raisebox{-.7ex}{\ensuremath{\xRightarrow{\mbox{\smaller\keymathbox{#1}\,\keymathbox{#2}}}}}\xspace}%
\newcommand\lowbisim{\ensuremath{\sim_{\textit{low}}}\xspace}
\newcommand\GeneralH  {\ensuremath{\mathsf{h}}\xspace}%
\newcommand{\invars}{\ensuremath{φ}\xspace}
\newcommand{\Methods}{\ensuremath{M}\xspace}
\newcommand{\MethSums}{\ensuremath{{Sums}}\xspace}
\newcommand\HeapTypes{\ensuremath{\Gamma}\xspace}%
\newcommandx\GenericUpgradeObjLevel[4][1=\hd,2=\sHeap]{\ensuremath{\mathsf{UpdHpLev}\llambdatwo{#3}{#4}}\xspace}%
\newcommandx\HeapUpgradeObjLevel[3][1=\sHeap]{\GenericUpgradeObjLevel[\hd][#1]{#2}{#3}}%
\newcommandx\DeepUpgradeObjLevel[3][1=\sHeap]{\GenericUpgradeObjLevel[\hdeep][#1]{#2}{#3}}%
\newcommandx\ShalUpgradeObjLevel[3][1=\sHeap]{\GenericUpgradeObjLevel[\hshal][#1]{#2}{#3}}%
\newcommandx\SymaUpgradeObjLevel[3][1=\sHeap]{\GenericUpgradeObjLevel[\hsyma][#1]{#2}{#3}}%
\newcommandx\GenericBulkUpgradeFrom[3][1=\hd,2=\sHeap]{\ensuremath{\mathsf{BulkUpgr}_{#2←#3}}\xspace}%
\newcommandx\HeapBulkUpgradeFrom[2][1=\sHeap]{\GenericBulkUpgradeFrom[\hd][#1]{#2}}
\newcommandx\DeepBulkUpgradeFrom[2][1=\sHeap]{\GenericBulkUpgradeFrom[\hdeep][#1]{#2}}
\newcommandx\ShalBulkUpgradeFrom[2][1=\sHeap]{\GenericBulkUpgradeFrom[\hshal][#1]{#2}}
\newcommandx\GenericRestoreObjLevels[3][1=\hd,2=\sHeap]{\ensuremath{\mathsf{RstrLev}_{#2←#3}}\xspace}%
\newcommandx\HeapRestoreObjLevels[2][1=\sHeap]{\GenericRestoreObjLevels[\hd][#1]{#2}}
\newcommandx\DeepRestoreObjLevels[2][1=\sHeap]{\GenericRestoreObjLevels[\hdeep][#1]{#2}}
\newcommandx\ShalRestoreObjLevels[2][1=\sHeap]{\GenericRestoreObjLevels[\hshal][#1]{#2}}
\newcommandx\SymaRestoreObjLevels[2][1=\sHeap]{\GenericRestoreObjLevels[\hsyma][#1]{#2}}
\newcommandx\GenericCopyAliases[3][1=\hd,2=\sHeap]{\ensuremath{\mathsf{CopyRels}_{#2←#3}}\xspace}%
\newcommandx\HeapCopyAliases[2][1=\sHeap]{\GenericCopyAliases[\hd][#1]{#2}}
\newcommandx\DeepCopyAliases[2][1=\sHeap]{\GenericCopyAliases[\hdeep][#1]{#2}}
\newcommandx\ShalCopyAliases[2][1=\sHeap]{\GenericCopyAliases[\hshal][#1]{#2}}
\newcommandx\GenericNullRefs[3][1=,2=\sHeap]{\ensuremath{\mathsf{NullRefs}_{#2}\lambdaone{#3}}\xspace}%
\newcommandx\HeapNullRefs[2][1=\sHeap]{\GenericNullRefs[\hd][#1]{#2}}
\newcommandx\DeepNullRefs[2][1=\sHeap]{\GenericNullRefs[\hdeep][#1]{#2}}
\newcommandx\ShalNullRefs[2][1=\sHeap]{\GenericNullRefs[\hshal][#1]{#2}}
\newcommandx\GenericHeapResetRef[3][1=\hd,2=\sHeap]{\ensuremath{\GenericUpdateRels[#1]{#3 = \_}{#2}}\xspace}%
\newcommandx\HeapResetRef[2][1=\sHeap]{\GenericHeapResetRef[\hd][#1]{#2}}
\newcommandx\DeepResetRef[2][1=\sHeap]{\GenericHeapResetRef[\hdeep][#1]{#2}}
\newcommandx\ShalResetRef[2][1=\sHeap]{\GenericHeapResetRef[\hshal][#1]{#2}}
\newcommandx\SymaResetRef[2][1=\sHeap]{\GenericHeapResetRef[\hsyma][#1]{#2}}
\newcommandx\GenericHeapCopyRef[4][1=\hd,2=\sHeap]{\ensuremath{\GenericUpdateRels[#1]{\CopyRefOp{#3}{#4}}{#2}}\xspace}%
\newcommandx\HeapCopyRef[3][1=\sHeap]{\GenericHeapCopyRef[\hd][#1]{#2}{#3}}
\newcommandx\DeepCopyRef[3][1=\sHeap]{\GenericHeapCopyRef[\hdeep][#1]{#2}{#3}}
\newcommandx\ShalCopyRef[3][1=\sHeap]{\GenericHeapCopyRef[\hshal][#1]{#2}{#3}}
\newcommandx\SymaCopyRef[3][1=\sHeap]{\GenericHeapCopyRef[\hsyma][#1]{#2}{#3}}
\newcommandx\GenericHeapLoadRef[5][1=\hd,2=\sHeap]{\ensuremath{\GenericUpdateRels[#1]{\LoadRefOp{#3}{#4}{#5}}{#2}}\xspace}%
\newcommandx\HeapLoadRef[4][1=\sHeap]{\GenericHeapLoadRef[\hd][#1]{#2}{#3}{#4}}
\newcommandx\DeepLoadRef[4][1=\sHeap]{\GenericHeapLoadRef[\hdeep][#1]{#2}{#3}{#4}}
\newcommandx\ShalLoadRef[4][1=\sHeap]{\GenericHeapLoadRef[\hshal][#1]{#2}{#3}{#4}}
\newcommandx\SymaLoadRef[4][1=\sHeap]{\GenericHeapLoadRef[\hsyma][#1]{#2}{#3}{#4}}
\newcommandx\GenericHeapStoreRef[5][1=\hd,2=\sHeap]{\ensuremath{\GenericUpdateRels[#1]{\StoreRefOp{#3}{#4}{#5}}{#2}}\xspace}%
\newcommandx\HeapStoreRef[4][1=\sHeap]{\GenericHeapStoreRef[\hd][#1]{#2}{#3}{#4}}
\newcommandx\DeepStoreRef[4][1=\sHeap]{\GenericHeapStoreRef[\hdeep][#1]{#2}{#3}{#4}}
\newcommandx\ShalStoreRef[4][1=\sHeap]{\GenericHeapStoreRef[\hshal][#1]{#2}{#3}{#4}}
\newcommandx\SymaStoreRef[4][1=\sHeap]{\GenericHeapStoreRef[\hsyma][#1]{#2}{#3}{#4}}
\newcommandx\Connect[2]{\ensuremath{\mathsf{Connect}\lambdatwo{#1}{#2}}\xspace}%
\newcommandx\Share[2]{\ensuremath{\mathsf{Share}\lambdatwo{#1}{#2}}\xspace}%
\newcommand{\InRelation}{\ensuremath{\beta}\xspace}
\newcommandx\ConcCopyRef[3][1=\sHeap]{\red{\HeapCopyRef[\hd][#1]{#2}{#3}}}
\newcommandx\ConcLoadRef[4][1=\sHeap]{\red{\HeapLoadRef[#1]{#2}{#3}{#4}}}
\newcommandx\ConcStoreRef[4][1=\sHeap]{\red{\HeapStoreRef[\hd][#1]{#2}{#3}{#4}}}
\newcommand{\mHeapVal}{\ensuremath{\mathbf{h}}\xspace}
\newcommand{\VarsTypes}{\ensuremath{\Omega}\xspace}
\newcommand\SinkRule{\rname{Exit}}%
\def\cthis{\cv{this}}
\newcommand\lambdathree[4][]{\ensuremath{\ifthenelse{\isempty{#1#2#3}}{}{#1（#2, #3, #4）}}}
\setlist{noitemsep}
\setlist[1]{labelindent=\parindent} 
\newlist{compactitem}{itemize}{4}
\setlist[compactitem,1]{nolistsep,label=\textbullet}
\setlist[description]{font=\mdseries\itshape}
\newlist{mathdesc}{description}{4}
\newlist{mathdesc*}{description*}{4}
\newlist{mathpardesc}{description}{4}
\newlist{mathpardesc*}{description}{4}
\newlist{mathpardesc**}{description*}{4}
\newcommand*{\keymathbox}[1]{%
  \mdseries%
  \upshape%
  \setlength{\fboxsep}{.4pt}%
  \fcolorbox{gray}{white}{%
    \strut\ensuremath{#1}%
  }%
}%
\setlist[mathdesc]{format=\ensuremath}
\setlist[mathdesc*]{format=\ensuremath,mode=unboxed}
\setlist[mathpardesc]{format=\keymathbox,
  leftmargin=\parindent,
  labelindent=0pt,
}
\setlist[mathpardesc*]{format=\keymathbox,
  nosep,
  leftmargin=0pt,
  labelindent=\parindent,
}
\setlist[mathpardesc**]{format=\keymathbox,mode=unboxed}
\newlist{bolddescr}{description}{4}
\newcommand*{\mybolddescritem}[1]{\bfseries\upshape{#1}:}%
\setlist[bolddescr]{style=sameline, nosep, format=\mybolddescritem,
  leftmargin=0pt, labelindent=0pt,
}
\newcommand{\removelatexerror}{%
  \ifieetranloaded%
  \let\@latex@error\@gobble%
  \fi%
}
\newif\ifusetikzexternal
\edef\defaultpgflinewidth{\the\pgflinewidth}
\tikzset{
  every path/.style = {
    line width=1pt,
    cap=round,
    join=round,
  },
  rho-highlight/.style = {
    rectangle,
    rounded corners=2pt,
    draw,
    thin,
    inner sep=0pt,
    outer sep=0pt,
  },
  rho-annot-arrow/.style = {
    thin,
    -stealth,
  },
}
\newcommand{\lstkwstyle}{\color{blue}\bfseries}
\newcommand{\lstbasicstyle}{\fontfamily{lmvtt}\selectfont%
\upshape%
}
\newcommand{\lstbasicsize}{
\footnotesize\linespread{0.96}%
}
\newcommand{\lstinlinesize}{
}
\lstdefinestyle{numbered}{%
  numbers=left,
  numberstyle=\tiny,
  numbersep=2pt,
  firstnumber=1,
  numberfirstline=true,
  xrightmargin=0pt,%
  framesep=0pt,%
}
\gdef\lst@numberfirstlinefalse{\global\let\lst@ifnumberfirstline\iffalse}
\lstdefinestyle{nonumbers}{%
  numbers=none,
  xleftmargin=0pt,
}
\lstdefinestyle{inlined}{%
  basicstyle=\lstinlinesize\lstbasicstyle,%
  breakatwhitespace,%
}
\definecolor{darkcyan}{rgb}{0.0, 0.55, 0.55}
\tiny\color{gray},
\newcommand{\mytilde}{
    \texttt{~
  }}
\lstdefinelanguage{meth}[]{Java}{%
  inputencoding=utf8,
  literate=%
  {¬}{{\(\mybld\neg\)}}1%
  {∨}{{\(\mybldbin\vee\)}}1%
  {∧}{{\(\mybldbin\wedge\)}}1%
  {⊔}{{\(\mybldbin\sqcup\)}}1%
  {⊓}{{\(\mybldbin\sqcap\)}}1%
  {⊑}{{\(\mybldbin\sqsubseteq\)}}1%
  {⊥}{{\(\mybld\bot\)}}1%
  {⊤}{{\(\mybld\top\)}}1%
  {~}{{\mytilde}}1%
  ,
  classoffset=1,
  morekeywords={output,checkpoint},keywordstyle=\bfseries\color{red!60!black},
  classoffset=0,
  emph={this},emphstyle={},
  tabsize=4
}
\lstdefinelanguage{secsum}[]{meth}{%
  inputencoding=utf8,
  tabsize=4
}
\newcommand\Summaries[1]{\ensuremath{\Z_{#1}}\xspace}%
\newcommand\Summary[2]{\ensuremath{\mbox{\textsc{Summary}}\lambdatwo{#1}{#2}}\xspace}
\newcommandx\SecuritySemantics[2][1=\hd]{\ensuremath{\mbox{\textsc{Semantics}}^{#1}\lambdaone{#2}}\xspace}
\newcommandx\Coreach[1]{\ensuremath{\mbox{\textsc{Coreach}}\lambdaone{#1}}\xspace}
\newcommandx\ComputeGuard[2]{\ensuremath{\mbox{\textsc{SynthesizeGuard}}\lambdatwo{#1}{#2}}\xspace}
\newcommandx\InferSummary[2]{\ensuremath{\mbox{\textsc{InferSummary}}\lambdatwo{#1}{#2}}\xspace}
\newcommandx\Minimize[2]{\ensuremath{\mbox{\textsc{Minimize}}\lambdatwo{#1}{#2}}\xspace}
\newcommandx\Triang[3]{\ensuremath{\mbox{\textsc{Triang}}\lambdathree{#1}{#2}{#3}}\xspace}
\newcommandx\cofactor[2]{\ensuremath{\mathrm{cofactor}\lambdatwo{#1}{#2}}\xspace}
\renewcommand\Support[1]{\ensuremath{\mathsf{Supp}_{#1}}\xspace}%
\newcommand\Footprint[1]{\ensuremath{\mathsf{Foot}_{#1}}\xspace}%
\newcommand\MethSum[1]{\ensuremath{\mathit{Sum}_{#1}}\xspace}%
\newcommand\SumSupport[1]{\Support{#1}}%
\newcommand\SumFootprint[1]{\Footprint{#1}}%
\newcommand\SumGuard[1]{\ensuremath{\mathit{Guard}_{#1}}\xspace}%
\newcommand\SumEffect[1]{\ensuremath{\mathit{Effect}_{#1}}\xspace}%
\newcommand\havar{ \ensuremath{{\Varset}_{\hvar}} \xspace}%
\newcommand\hfvar[1]{\ensuremath{\Varset_{\retVar {\hvar{}}}}\xspace}%
\newcommand\HDom[1]{\ensuremath{\mathbb{H}
  }\xspace}%
\newcommand\HFoot[1]{\ensuremath{\mathbb{H}_{\mathit{ret}}%
  }\xspace}%
\newcommand{\green}[1]{#1}%
\newcommand{\red}[1]{{\color{red}#1}}
\newcommand{\nbrem}[2][]{%
  \ifthenelse{\equal{#1}{yep}}\relax{{\leavevmode
      #2}}}%
\newcommand\Symmaries{\textsf{Symmaries}\xspace}
\begin{document}
\title{\Symmaries: Automatic Inference of Formal Security Summaries for Java Programs}

\ifacmartloaded
\author{Narges Khakpour}
\affiliation{%
	\institution{Newcastle University}
	\country{UK}
}
\email{narges.khakpour@ncl.ac.uk}

\author{Nicolas Berthier}
\affiliation{%
	\institution{OcamlPro, France and University of Liverpool}
	\country{France}
}
\orcid{0000-0002-0933-8193}
\email{nicolas.berthier@ocamlpro.com}
\fi

\ifieetranloaded
\author{%
	Narges Khakpour\thanks{Newcastle University, UK and Linn\ae us University, Sweden. Email: narges.khakpour@ncl.ac.uk}%
	\and \hspace{2cm}
	Nicolas Berthier\thanks{OcamlPro, France and University of Liverpool. Email: nicolas.berthier@ocamlpro.com}%
}
\fi

\ifllncsloaded
\author{%
	Narges Khakpour\inst{1,2} \and
	Nicolas Berthier\inst{3,4}%
}
\institute{%
	\inst{1} Newcastle University, UK
	\and
	\inst{2} Linn\ae us University, Sweden
	\and
	\inst{3} OcamlPro, France
	\and
	\inst{4} University of Liverpool, UK
}
\fi


	\begin{abstract}
	We introduce a scalable, modular, and sound approach for automatically constructing {formal security specifications} for Java bytecode programs in the form of method summaries. A summary provides an abstract representation of a method’s security behavior, consisting of the conditions under which the method can be securely invoked, together with specifications of information flows and aliasing updates. Such summaries can be consumed by static code analysis tools and also help developers understand the behavior of code segments, such as libraries, in order to evaluate their security implications when reused in applications.
Our approach is implemented in a tool called \Symmaries, which automates the generation of security summaries. We applied \Symmaries to Java API libraries to extract their security specifications and to large real-world applications to evaluate its scalability. Our results show that the tool successfully scales to analyze applications with hundreds of thousands of lines of code, and
that \Symmaries achieves a promising precision depending on the heap model used. We prove the soundness of our approach in terms of guaranteeing termination-insensitive non-interference.
	\end{abstract}

	\maketitle

\newcommand*\smref[1]{
	\ref{#1}
}


%
%




\section{Introduction}
\label{sec:introduction}
Due to the widespread use of open-source software development as an established practice, vulnerabilities can quickly propagate, especially through common third-party packages that are {directly} or {indirectly} used by many developers.
Approaches to ensuring the security of open-source code focus on various aspects including detecting insecure code~\cite{uss/Xiao0YXYLL0HZS20,sp/KimWLO17,uss/WooCLO23,ccs/DuanBXKL17,icse/WooPKLO21}, isolating malicious code~\cite{ZimmermannSTP19,StaicuPL18}, identifying instances of reused code, whether outdated or modified, through software composition analysis~\cite{icse/WooPKLO21}, and developing trusted package managers to ensure the integrity of uploaded packages~\cite{TorresAriasAKC19,KuppusamyDC17}.
These methods for vulnerability identification often rely on analyzing code patterns and detecting potential security issues through advanced static analysis techniques, e.g., matching code against known reported vulnerabilities. While these efforts have led to scalable approaches that help prevent the use of insecure open-source code, they still fail to detect all vulnerabilities. This limitation stems from the lack of formal foundations necessary to provide sound security guarantees.

\newcommand\FileCacheImageOutputStream{\texttt{FileCacheImageOutputStream} }
\newcommand\Write{\texttt{write} }

This lack of formal foundations motivates the need for methods that can correctly specify code behavior, enabling consumers to rigorously assess and reason about security in a specific context. Since security is defined relative to particular requirements, a code segment that is secure in one setting may be insecure in another. For example, the method \texttt{FileImageOutputStream} from the class \texttt{javax.imageio.stream}, shown in Figure~\ref{fig::running.example.code}, writes a sequence of bytes to the stream at the current position. Internally, it calls \texttt{write} from \texttt{java.io.RandomAccessFile}, which in turn writes the data via the native method \texttt{writeBytes}. The security of this method depends on the sensitivity of the data (\cv{r1}) being written and the stream indices (\cv{i0} and \cv{i1}) it accesses; if used in a context where any of these are highly sensitive, the method could result in information leakage. Therefore, providing formal and correct specifications of method behavior is essential for understanding its security implications.
Manual construction of correct and complete specifications is challenging and error-prone~\cite{ramanathan2007static}. To address this, several methods have been proposed to automatically infer specifications from code~\cite{wang2024dainfer,arzt2016stubdroid,kan2025spectre,eberhardt2019unsupervised,tileria2024docflow,icse/StaicuTSMP20,clapp2014mining}. However, existing approaches suffer from at least one of the challenges C1–C4, which limits their effectiveness in security analysis.
The first challenge is that only a few approaches focus on inferring security-relevant specifications, mainly taint specifications~\cite{tileria2024docflow,icse/StaicuTSMP20}, and these methods are unsound (C2), while soundness is essential for generating formal specifications that correctly capture the security behavior of code.
Moreover, sound formal approaches often face scalability issues when applied to realistic, large-scale codebases (C3). Moreover, the outputs of existing specification inference techniques are mostly designed for static analysis tools and often lack simple, clear and user-friendly conditions that would enable developers to manually assess the security of third-party components for reuse or to evaluate the security of their own code (C4).




\begin{figure}
\lstdefinestyle{numbered}{
  stepnumber=1,
  numbersep=5pt,
  basicstyle=\ttfamily\tiny,
  breaklines=true,
  frame=tb,
moredelim=**[is][\color{white}]{@}{@}**
}
\begin{minipage}{0.31\linewidth}
\begin{lstlisting}[style=numbered,language=Java,escapeinside={(*}{*)}]
class FileImageOutputStream {
 long streamPos;
 java.io.RandomAccessFile raf;
...
void write(byte[] r1,int i0,
            int i1){
  java.io.RandomAccessFile r2;
  long v3, v4;
  flushBits();
  r2=this.raf;
  r2.write(r1, i0, i1)(*\tikzmark{call}*);
  v3=streamPos;
  v4=(v3+ i1);
  streamPos=v4;
  }
}
\end{lstlisting}
\end{minipage}\hfill
\begin{minipage}{0.25\linewidth}
\begin{lstlisting}[style=numbered,language=Java,escapeinside={(*}{*)}]
class RandomAccessFile {

 void (*\tikzmark{def}*)write(byte[] r1,
            int i0,int i1) {
    writeBytes(r1, i0, i1);
    return;
    }
 . . .
}
\end{lstlisting}
\end{minipage}\hfill
\begin{minipage}{0.43\linewidth}
  \lstinputlisting[mathescape=true,language=meth,basicstyle=\tiny\ttfamily]{examples//WriteSummary}
\end{minipage}\hfill

\tikz[overlay,remember picture]
  \draw[->,thick,dotted,black] (call) .. controls +(1,-1) and +(-1,1) .. (def);

\caption{{An example Java bytecode (left) and  the security summary for \texttt{RandomAccessFile.}\Write} (right)}
\label{fig::running.example.code}
\end{figure}



\red{
}


To address this gap, we propose an approach to automatically construct formal specifications for Java methods at the bytecode level that captures two main aspects. The first aspect concerns the security-related behavior of the method in terms of \emph{information flow control} (IFC)~\cite{sabelfeld2003language}, a well-established technique for analyzing application security that prevents unauthorized disclosure or modification of sensitive data. IFC is often formalized using noninterference-based properties~\cite{GuoguenMeseguer1982SecPolsNSecMods}, which ensure that an attacker's observations cannot be influenced by high-sensitivity data. Our specifications track information flows via heap and consider both explicit and implicit flows.
The second aspect concerns the method's aliasing specification, which describes how method calls modify the heap specified as relationships between references, e.g., tracking which objects may point to the same memory location (may-alias relations) and how these relationships change after executing a method. By capturing updates to the heap and its aliasing structure, our approach enables a correct understanding of how information flows via the heap, and it helps understand and reason about side effects, potential information leaks, and the overall security impact of a method.

We develop \Symmaries, a \emph{scalable} tool that employs a formal \emph{summary-based} approach to construct \emph{sound security summaries} for Java methods. We model program semantics as a symbolic control flow graph that assigns security levels to variables and references and represents the heap as a set of aliasing relationships among references. The semantics captures how these security levels and heap aliasing relationships evolve as a result of a method's behavior.
A method summary consists of two main components: (i) a sound \emph{guard}, which specifies the conditions under which it is secure to invoke the method, expressed as a concise logical expression; and (ii) an \emph{update}, which describes the effects of the method invocation on the program state, represented as a set of assignments. The guard and update are defined over aliasing relationships in the heap and the security levels of the method's arguments and \texttt{this}.
\figurename~\ref{fig::running.example.code} presents an example security summary constructed for \Write, where a variable \cv{falias\_r1\_r2} denotes the field-aliasing relationship between two typical references \cv{r1} and \cv{r2} (i.e.,
that \cv{r1} may alias \cv{r2} through a sequence of field accesses), and \cv{objectLevel\_r} and \cv{level\_r} indicate the security levels of reference \cv{r} (See Section~\ref{sec:input-programs} for more details).
The guard of this summary states that this method is secure to call in any context where none of its arguments or \this contain high-sensitive information.
We believe that our security summaries provide a \emph{meaningful}, \emph{sound}, and \emph{intelligible} abstraction of a method's security behavior. They can be leveraged by static analysis tools or by code consumers to reason about the security of code without inspecting implementation details.


\subsubsection*{Summary of Contributions}
We propose a scalable, precise, and sound approach for security summary construction of Java bytecode programs. Our contributions are as follows:

\begin{itemize}[nosep,left=0pt]
  \item We propose a new method for constructing security summaries of Java (bytecode) programs, providing an intelligible abstraction of their security behavior. Our approach modularly analyzes programs to extract their summaries, captures information flows via the heap, and considers both explicit and implicit flows.

  \item We implement \texttt{\Symmaries}, a complete toolchain supporting this method, and evaluate its scalability through extensive experiments.
  Our experiments demonstrate that our summary-based approach scales to large, real-world open-source applications.
  The tool produces a set of method contracts as output, which we designed to be easily understandable by users.

  \item Owing to its symbolic design, \texttt{\Symmaries} requires no manual annotations (e.g., entry point identification), making it practical and lightweight to apply to large applications, as confirmed by our evaluation.

  \item The precision of inferred security summaries depends on several sensitivity dimensions, including context, object, flow, and field. Our approach is flow-sensitive, context-sensitive, and object-sensitive, while field-sensitivity depends on the chosen heap model. \texttt{\Symmaries} provides multiple heap models in which the heap is represented as different aliasing relations ~\cite{BerthierKhakpour23Heap}. Hence, the precision of results depends on the selected heap model.
  {We applied \Symmaries to a subset of the Java JDK to evaluate precision, and our results show that the tool infers precise specifications for methods.}

  \item We formally prove the soundness of our approach with respect to termination-insensitive non-interference.
\end{itemize}

\begin{leaveout}



"Very good intro to the problem by V1SCAN"



\todo{Nice intro for the need to summaries specification and selling the story "Extracting Taint Specifications for JavaScript Libraries.pdf"
}

\begin{redenv}
We first give in Section~\ref{sec:synth-symb-summ}
on the kind of symbolic models and the related controller synthesis
algorithm for safety that we use.
Next, we start Section~\ref{sec:input-programs-analysis-framework}
with a description of syntax and concrete semantics of the input
programs we consider{, and then turn to details about the
  parametric heap abstract domain and the structure of symbolic
  summaries}.

\begingroup\itshape%
\begin{itemize}[nosep,leftmargin=1em]
\item 
  We first lay out the requirements for symbolic heap abstractions
  that enable the capture of information flows within the heap and define
  symbolic summaries (Section~\ref{sec:symbolic-summaries-n-heap-abstraction});
\item We then develop a technique for symbolically modeling the
  security semantics of low-level code in a way that enables one to
  capture implicit flows
  (Section~\ref{sec:symbolic-encoding-security-semantics});
\item 
  We propose the first fully automated 
  and scalable field-insensitive summary-based analysis tool that addresses IFC, called \Symmaries, and
  give a proof of termination-insensitive noninterference for
  our 
  approach (Section~\ref{sec:interprocedural});
\item We detail our implementation in a complete tool-chain, and
  exercise it on \IFSPEC benchmarks~\cite{HamannHMM0T18}.
  We obtain significant improvements on precision over the
  state-of-the-art whilst retaining soundness.
  We also demonstrate the scalability of our approach by reporting on
  our extensive analysis of {70} real-life web-applications from the
  ABM benchmark~\cite{do2016toward}
  (Section~\ref{sec:experimental-results}).
\end{itemize}
\endgroup
\itodo{Link to what's in SM already…}
\end{redenv}

\end{leaveout}

\renewcommand\figurename{Fig.}
\section{Basic Concepts}

\subsubsection*{Security Domain}
We follow a type-based approach to formalize security where
security types are assigned to program variables and objects. We consider a
classical two-level \emph{security domain} that is defined as a
bounded lattice \(\langle \LL, \sqsupseteq 
\rangle \), where 
\(\LL ⊇ ｛\low, \high｝\) is the finite set of security levels, \(⊑\) is a
partial order defined over \LL{, and \(\low\) and \(\high\) respectively
	denote the least (low) and greatest (high) security levels}.
We will typically define a variable \plvl x to denote the
security level associated with a program variable \(x\) (denoted by \cv{level\_x} in \figurename~\ref{fig::running.example.code}).
We then focus on enforcing a confidentiality policy where
high-sensitive (\high) information should not influence low-sensitive (\low)
outputs.

\subsubsection*{Input Programs}
\label{sec:input-programs}
We consider input programs as a set of methods that manipulate primitive data, and references to objects.
To maximize the ready availability of our approach
, we concentrate on a low-level language where 
most statements 
correspond to %
a family of \Java bytecodes, in a way similar to
\textsf{Jimple}~\citep{vallee2010soot}.
The \emph{body}
of a method \(m\)
is a non-empty finite
semicolon-separated sequence of statements, built according to \stms
in the following grammar:
{\setlength\abovedisplayskip{3pt}\setlength\belowdisplayskip{3pt}%
	\[
	\begin{array}{r@{\ }r@{\ }l}
		\stms &⩴& [\lbl{l}:]~\stm;\stms \sep \surd \\
		\stm &⩴& {\ipstms} \sep [v = ] r.\meth(\lits) \sep \coutput l (\elts)  \sep \creturn~ [v \sep r]
		\\
		\ipstms &⩴& v = e \sep v = r.f \sep r.f = e  \sep r = r \sep r = \cnew~c \sep \\
		&& r = \Null  \sep  [\cif~(e)]~\cgoto~\lbl{l} \\
	\end{array}
	\]}%
where square brackets denote optional constructs,
\(\surd\) is an empty sequence of statements, and \(\lbl{l}\) is a label that uniquely identifies a statement.
\(r.\meth(\lits)\) is a call to the method \meth of the object
referenced by \(r\), with a mapping \lits that associates a
\emph{value} with each formal parameter of \meth.
Our language additionally features a statement \({\coutput
	l(\elts)}\)
that outputs \elts
to a public channel. 
\red{\ipstms} shows the syntax for intra-procedural  statements, e.g., assignments, branches, and constructors, where $e$ is an expression, $v$ is a variable and $f$ is a field.

\subsubsection*{Heap Model}
\label{sec:heap-abstraction}
We use a {heap abstraction} \GeneralH to capture information flow to and from heap objects
following the approach introduced in \cite{BerthierKhakpour23Heap}, which maintains,
\begin{itemize}[nosep,left=0pt]
	\item a \emph{heap typing environment} that
	associates an upper-bound on the security level of the portion of heap
	that is reachable via any reference \(r ∈ R\) where  $R$ is a set of references.  The security level of an object referenced by $r$ is denoted by \hlvl r.

	\item (\emph{over-approximations} of) \emph{heap-related relations} between the  reference
	variables in \(R\) and/or the objects that are reachable thereof.
	An example of a heap relation is the \emph{may-alias} relation that holds between two reference variables, if they both point to the same object.
\end{itemize}
\setlength{\intextsep}{5pt} 
\setlength{\columnsep}{10pt} 
\begin{figure}
  \centering
  \begin{tikzpicture}[node distance = 6mm]
    \def\smallsp{1.4em}%
    \def\medsp{7em}%
    \node [draw, rounded corners] (a) {$a:\high$};
    \node [draw, rounded corners, right = of a] (b) {$b:\high$};
    \node [draw, rounded corners, below = of  a] (c) {$c:\low$};
    \node [draw, rounded corners, right = \medsp of b] (d) {$d:\high$};
    \node [draw, rounded corners, below = of d] (f) {$e:\low$};

    \draw (a.east) to node [above, anchor=south east] {$\AliasRelName$} (b);
    \draw (a) to node [left] {$\AliasRelName$} (c);
    \draw (b) to node [right, anchor=north west] {$\AliasRelName$} (c);
    \draw (d) to node [above, anchor=east] {$\AliasRelName$} (f);
  \end{tikzpicture}
  \vspace{1mm}
  \hspace{-3.5cm}
  \parbox{\linewidth}{ 
  \[
  \begin{array}{ll}
    \GeneralH = &
    (\hlvl a = \high) \wedge
    (\hlvl b = \high) \wedge
    (\hlvl c = \low) \wedge
    (\hlvl d = \high) \wedge
    (\hlvl e = \low) \wedge\\
    &
    \AliasRel a b \wedge
    \AliasRel a c \wedge
    \AliasRel b c \wedge
    \AliasRel d e
  \end{array}
  \]
  }
  \caption{A heap model and its formal representation \(\GeneralH\)}
  \label{fig:heap-model}
\end{figure}
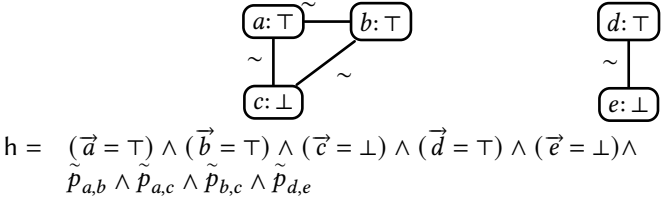
We formalize a heap as a predicate over a set of variables representing the heap typing environment and the heap-related relations. A relation is formally specified as a predicate over a set of Boolean variables, that each indicates whether two references are in the relation or not (\ie we use predicate abstraction where a Boolean variable specifies whether a relation between two references holds).
\figurename~\ref{fig:heap-model} shows an example heap and its formal representation.
The nodes show the reference variables and their security types, and arcs between two references indicates they are in a \emph{may-share} heap-related relation: for any given \(｛r, s｝⊆ R\), \(r\) and \(s\) may share iff there may exist an object that is (transitively) reachable via both \(r\) and
\(s\).
In this formal representation,
the variable \ensuremath{\hlvl[]r} (i.e., correspond to the variable \cv{objlevel\_r} in \figurename~\ref{fig::running.example.code}) contains the \emph{upper bound} on
	the security levels of any object that is reachable via \(r ∈ R\),
and a proposition \ensuremath{\GenericRel[] r s } holds whenever \(r\) and \(s\) are in the may-share relation.
\begingroup%
\def\fvar{\hvar}%
This formula states that the references $a$, $b$, and $d$ contain high-sensitive information (\eg $\hlvl a = \high$), while $e$ and $c$ are low-sensitive (\eg $\hlvl e = \low$). Furthermore, the references associated with the edges may alias each other; for example, $a$ and $c$ may alias each other (\eg the propositional variable $\GenericRel a c$ holds).
 Note that different types of relations can be used to represent the heap.
 For instance, we use \TransFieldAliasRel r s  to represent the field-aliasing relations (i.e., the variable \cv{falias\_x\_y} in \figurename~\ref{fig::running.example.code}).
See~\cite{BerthierKhakpour23Heap} for some examples of possible relations.


\subsubsection*{Symbolic Control-flow Graphs --- SCFGs}
The transition systems that we use to specify the program semantics are 
traditional labeled transition systems augmented with sets of \emph{state} and \emph{input} variables, respectively denoted \(X\) and \(I\).
The values for input variables can be seen as coming from the environment of the system.
Figure~\ref{fig:overview} shows an SCFG describing the semantics of the method  \cv{FileImageOutputStream.}\Write.
Predicates and right-hand-side expressions in assignments are built using traditional logical connectives (\ie ¬, ∨, ∧ and ⇒), along with a ternary conditional construct ``\ite{⋅}{⋅}{⋅}'' with an obvious meaning. 
The symbolic variables we make use of typically take their value in the security domain \LL, or the set of Booleans \(\BB ≝ ｛\ff, \tt｝\)
.
We use a \emph{merge operation} \SQMergeiu to merge variable assignments
.
This operation is obtained as a union where multiple expressions assigned to a variable \(v\) are combined using some 
connective \(⊔_v\).
The latter depends on the semantics of each variable:
as we only use variables to hold over-approximations in our encoding, we use the dis\-junction ∨ for Booleans, and the least-upper-bound ⊔ for security levels.
For instance, \(\{a \assign \tt, b \assign \ff\} \SQMergeiu \{b\assign \tt\} = \{a \assign \tt, b \assign \ff ∨ \tt\}\).
\newcommand{\Sf}{\ensuremath{S}\xspace}%
\green
{The semantics of \Sf  is a finite-state automaton  \FSM\Sf=\FSMdef whose \emph{state-space} \(\Q\) is the Cartesian product of the set of locations Λ and the set of all possible valuations for the state variables, \ie \Val X.
	\(\Q_0 \subseteq \Q\) is the set of initial valuations of state variables that satisfy $X_0$.
	\FSM\Sf takes one transition from \smTrans whenever it receives a valuation for \emph{all} the input variables $\mathcal{I}$, \ie an element in \Val I.
	In any location, there is always exactly one transition whose guard is satisfied by the valuations for all the variables.
	When this transition is taken,
	its assignments are applied to update the state variables.}
An \emph{invariant} \(φ 
\) 
for the SCFG \(S\) is a mapping from locations to predicates on state variables.
\(S\) \emph{satisfies} \(φ\) iff every state \(
q
\) with location ℓ that is reachable by S is such that~\(
q
⊨ φ
(ℓ)
\).


\section{Method Overview}
\begin{figure}
	\centering
\input{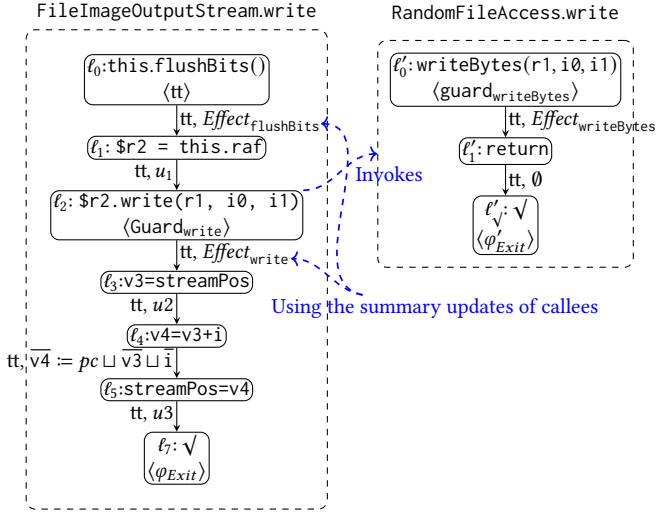}%
\newcommand{\flushBits}{\cv{flushBits}}
\scalebox{0.8}{
\begin{tikzpicture}[thick, node distance=5mm,
  basic-arrow/.style={->, >=stealth},
  analysis-arrow/.style={->, thick, dashed, blue}]

\node [draw, rounded corners] (a0) {$ℓ_0:$\cv{this.flushBits()}\\ $\langle \tt \rangle$};
\node [draw, rounded corners, below=of a0] (a1) {$ℓ_1:$ \texttt{\$r2 = this.raf}};
\node [draw, rounded corners, below=of a1] (a2) {$ℓ_2:$ \texttt{\$r2.write(r1, i0, i1)}\\ $\langle \mathtt{Guard}_{\cv{write}} \rangle$};
\node [rounded corners, below=of a2,draw] (a3) {$ℓ_3:$\texttt{v3=streamPos}};
\node [draw, rounded corners, below=of a3] (a4) {$ℓ_4:$\texttt{v4=v3+i}};
\node [draw, rounded corners, below=of a4] (a5) {$ℓ_5:$\texttt{streamPos=v4}};
\node [draw, rounded corners, below=of a5] (sink) {$ℓ_7:\surd$\\ $\langle φ_{Exit}\rangle$};

\draw[basic-arrow] (a0) to node [right] {\tt, \(\SumEffect{\flushBits}\)} (a1);
\draw[basic-arrow] (a1) to node [left, pos=.4] {\(\tt, u_1\)} (a2);
\draw[basic-arrow] (a2) to node [right, pos=.4] {\strut\tt, \SumEffect{\cv{write}}} (a3);
\draw[basic-arrow] (a3) to node [left, pos=.4] {\tt, \(u2\)} (a4);
\draw[basic-arrow] (a4) to node [left, pos=.4] {\tt, \(\cplvl{v4} \assign \pc ⊔ \cplvl{v3} ⊔ \cplvl{i}\)} (a5);
\draw[basic-arrow] (a5) to node [left, pos=.4] {\tt,  \(u3\)} (sink);

\node [fit={(a0) (a1) (a2) (a4) (a5) (sink)}, draw, dashed, inner sep=4mm, rounded corners=1ex] (scfg-main-box) {};

\node [draw, rounded corners, right=2cm of a0] (fa1) {\(ℓ'_0: {\cv{writeBytes (r1, i0, i1)}}\) \\  \(\langle \cv{guard}_{\cv{writeBytes}}\rangle \)};
\node [draw, rounded corners, below=of fa1] (fa2) {\(ℓ'_1: \creturn\)};
\node [draw, rounded corners, below=of fa2] (fsink) {$ℓ'_\surd:\surd$\\ $\langle φ'_{Exit}\rangle$};

\draw[basic-arrow] (fa1) to node [right] {\tt, \(\SumEffect{\cv{writeBytes}}\)} (fa2);
\draw[basic-arrow] (fa2) to node [right] {\tt, \(\emptyset\)}  (fsink);

\node [fit={(fa1) (fa2) (fsink)}, draw, dashed, inner sep=2mm, rounded corners=1ex] (scfg-write-box) {};

\draw[analysis-arrow] (a2.north east) to [out=0, in=180] node [above, xshift=2mm, yshift=1.5mm, anchor=north west] {Invokes} (scfg-write-box);

\node [rounded corners, below left=of fsink, xshift=25mm, yshift=-3mm] (use-box) {\color{blue}{Using the summary updates of callees}};
\foreach \i/\x/\y in {a2/19/3, a0/24/3} {
    \node [below=of \i, xshift=\x mm, yshift=\y mm] (dummy\i) {};
    \draw[analysis-arrow] (use-box) to [out=170, in=0] (dummy\i.south);
}

\node [fit={(scfg-write-box) (scfg-main-box)}, inner sep=0, outer sep=0] (scfgs-bbox) {};

\node [above=of a0] {\cv{FileImageOutputStream.write}};
\node [above=of fa1] {\cv{RandomFileAccess.write}};

\end{tikzpicture}
}

\hrule width 0pt              
 
	\caption{Our Summary-Based Approach}
	\label{fig:overview}
\end{figure}

We design a \emph{sound} and \emph{scalable} approach for constructing information flow summaries of methods in a modular way, where each method is analysed individually.
We assign \emph{security levels} to variables and heap objects manipulated by a method,
and specify the program semantics using symbolic control flow graphs.
The state variables of program semantics include (i) the security typing environment (\ie the security
level of variables and the level of context denoted \pc---for ``program context''), as well as (ii) the {heap model} used to capture information flows to and from heap objects.
{We follow the approach presented in ~\cite{BerthierKhakpour23Heap} to specify the  semantics of non-method call statements}.
To define the  semantics of method invocations, we follow a summary-based approach and re\-use the summary for the invoked method instead of inlining the semantics of its body into the caller's semantics.
The summary of a method is a \emph{contract} that consists of
\begin{itemize}[nosep,left=0pt]
	\item a \emph{guard} to specify the invocation conditions under which the method call is secure, \ie there is no illegal information flow in the method. A guard is described as constraints on the security levels  and heap structure of the method arguments.
	\item an \emph{effect} to represent its side-effects on security levels and aliasing relations among the heap objects, i.e., it is in principle a set of assignments describing how the aliasing relations and security labels are updated by the method (See the example in \figurename~\ref{fig::running.example.code}).
\end{itemize}
%
\def\labelloop{end}%
\def\labelloopcond{loop}%
\def\sh{\mathtt{s}}%
\def\eh{\mathtt{e}}%
\def\ih{\mathtt{b}}%
\def\thish{\mathtt{this}}%
\def\fix{\mathtt{fix}}%
\def\source{\mathtt{source}}%
\colorlet{rho3color}{green!46!black}%
\colorlet{rho5color}{red!74!black}%
\def\Cbar{\ensuremath{\mathtt{bar}}\xspace}%
\def\Cfix{\ensuremath{\mathtt{fix}}\xspace}%
\def\Csource{\ensuremath{\mathtt{source}}\xspace}%
\def\Sbar{\ensuremath{S_\Cbar}\xspace}%

\newcommand{\flushBits}{\cv{flushBits}}

\figurename~\ref{fig:overview} presents an SCFG that specifies the simplified semantics of the program shown in Figure~\ref{fig::running.example.code}, where locations correspond to statements and transitions update both security levels and {aliasing relations}.
Each location label represents a statement along with a possible invariant, if any, which shows the specification that must hold at that point.
Each transition is labeled with a guard and an update function.
To define a method's semantics, we use the summary update of a callee in place of its invocation. For example, the transition from $\ell_0$ in \cv{FileImageOutputStream.write} updates the state variables according to \flushBits's effect, i.e., \SumEffect\flushBits.
We also use the summary guard of the callee as a safety property (referred to as an invariant) at the invocation location to prevent any illegal information flow, e.g., $\mathtt{Guard}_{\cv{write}}$ at location \({ℓ_2}\) or true in location $\ell_0$.
Further, \cv{u1-u3} are a set of updates defined according to the semantics of field load and store statements.

To construct method summaries, {we find all path conditions that avoid reaching \emph{specific} states using a co-reachability analysis.}
To build the summary guard, we find all the path conditions that do not lead to \emph{insecure} states, \ie a state is insecure if it's either a sink statement leaking high-sensitive information, or is a method invocation reaching an insecure state directly or indirectly. For instance, the method \Write in Figure~\ref{fig:overview}: (i) should not leak any sensitive information in the sink statement associated with location $\ell_7$, and (ii) the invocations to the methods \flushBits and \cv{RandomFileAccess.}\Write should not result in reaching an insecure state. To ensure case (i), we require the invariant $\invars_{Exit}$ to hold at location $\ell_4$,
which informally states that the security levels of the arguments, \this, and the variables modeling the heap remain unchanged.
This encoding allows us to identify conditions that violate this invariant, which are then used to extract the updates.
To ensure case (ii), we require the summary guard of  \flushBits and  and \texttt{RandomFileAccess.}\Write to hold when they are invoked (defined as the true invariant $\tt$ in $\ell_0$ and $\mathtt{Guard}_{\cv{write}}$ in $\ell_2$, \ie no insecure state will be reached via calling these methods.

	To build summary effects, we find all path conditions leading to an update in the heap or security levels of the arguments.
	In our example, we store the initial heap $\hvar$ in $\retVar{\hvar{}}$ upon method entry, and never modify $\retVar{\hvar{}}$ during execution. We then use the invariant \(φ_{Exit}\) in \({ℓ_7}\), which is the method exit point where \(\hvar\) is the heap updated by the method, to express {symbolic upper-bounds} on the security types or heap relations among references upon exit.
	{Our co-reachability analysis then identifies all the path conditions $\phi$ that update the heap, which are the conditions that violate the symbolic bounds in \(φ_{Exit}\), as well as the path conditions that lead to insecure states, which are the conditions that violate the other invariants.}
	We use a \emph{triangularization} technique~\cite{Hietter-triang} to transform $\phi$ into a guard and a set of assignments that constitute the summary for the method.
We prove that summaries constructed using our approach are sound (See Section~\ref{sec:full-bottom-up}),
which enables us to reuse them in place of method invocations.
{The expressive power of summaries depends on the chosen heap-related relations (See~\cite{BerthierKhakpour23Heap} for examples of heap abstractions with different heap-related relations). Leaving this set of relations largely unspecified below allows us to abstract away unnecessary details in our exposition, where we focus on binary relations without loss of generality.}

For the inter-procedural analysis, we take a modular summary-based approach and analyse each method separately to construct its summary, thereby significantly enhancing the scalability of our analysis.
If the program under scrutiny does not feature (potentially) recursive
methods (\ie its call graph \(G\)  is a\-cyclic), one can use a simple
topological traversal of \(G\) to compute every summary.
Yet, this solution is not applicable if \(G\) is cyclic, \ie involves
recursive methods. 
{To overcome this issue, we define summaries in such a way that we can follow 
a
\emph{chaotic iteration strategy}~\citep{Bourdoncle:1993:ChaoticIterationStrategiesWithWidenings} to find a fixed-point that gives the set of all summaries for a program.}


\begin{leaveout}

	\begin{figure}%
		\centering%
		\smaller\smaller%
		\caption{Computing the symbolic summary 
			of a method \(m\).}%
		\label{fig:approach-overview-single-method}%
\end{figure}

: we compute
a \emph{controller} {\(K_m\)} that is such that a predefined
combination of \(S_m\) and \(K_m\) satisfies \(φ_m\).
More specifically, we construct \(S_m\) and \(φ_m\) so that \(K_m\)
actually constitutes a symbolic summary \MethSum m for \(m\).
\end{leaveout}
%
\begin{leaveout}
This summary provides a \emph{sound judgment on the circumstances upon
	which \(m\) satisfies the desired security
	property}
: this takes the form of a \emph{predicate} expressed on security
levels and heap-related relations pertaining to \(m\)'s formal
arguments.
Moreover, \MethSum m additionally includes a \emph{transformer} that
finitely represents every possible effect of \(m\) on all its
(infinitely many) possible calling contexts: this notably comprises
the relevant information flows it may induce, and its effects on
heap-related relations.
…
This appears in \figurename~\ref{fig:approach-overview-single-method},
where the translation of \(m\) into \(S_m\) and \(φ_m\) makes use of
summaries \MethSum n and \MethSum o obtained from other methods \(n\)
and \(o\).

\begin{redenv}
	Still, relying on such formal techniques poses a significant threat to
	scalability, and formal tools also tend to provide results in the form
	of large formulas or mathematical objects that are hard to interpret.
	\emph{Summary-based} (\aka bottom-up) static analysis
	approaches 
	come to our rescue at this
	point~\citep{SharirPnueli1978TwoApproaches2InterpDataflowAnalysis,Yorsh2008PreciseAndConciseProcedureSummaries}.
	\nbrem[yep]{\citet{Bodden:2018:SSE:3236454.3236500}, which currently
		drives the group that designs and extends the widely used static
		analysis framework for \Java called \soot~\citep{vallee2010soot},
		strongly advocates using summary-based approaches for the improved
		precision and simpler overall analysis complexities that they can
		bring over whole-program analyses.\todo{Move to RW?}}
	Indeed, whereas a whole-program analysis essentially builds a
	potentially large data-structure which then serves to answer queries
	about the program, a bottom-up approach typically computes a
	\emph{summary} for each procedure, that describes its aspects that are
	relevant to the analysis domain.
	Summaries are reused whenever a call to a procedure is encountered,
	and
the challenge is thus to represent them in a way that eases both their
computation and \emph{re\-usability}.

\end{redenv}
\end{leaveout}

		\vspace{-0.1in}
\section{Summary Construction}
\label{sec:symbolic-summaries}
In this section, we first formalize the concept of summary, and then compute the summary of a method that does not contain any method call.
		\vspace{-0.1in}
\subsection{Definition of  summaries}
Consider  a  method \(m\) with
body \MethBody m and
arguments \MethFormalArgs{m} , which includes the reference variable \cv{this} in the case of a non-static method.
{Let \(\MethRefs m \) be the set of all reference variables used in \(m\), including the reference arguments.}
We assume that the reference arguments of \(m\) never appear on the
left-hand side of any assignment in \MethBody m---this preliminary
requirement is easy to enforce via simple program transformations.
We define the \emph{support} of a method $m$ as a set of variables that consists of: the security types of arguments, the \emph{program context} variable
\pc,
and the set of  variables \havar used to encode the reachable portions of the heap upon method call, \ie

$$\SumSupport m ≝  ⋃\nolimits_{x 
	{∈ \MethFormalArgs m}}｛\plvl x｝%
∪ ｛\pc｝\cup {\havar m}.$$%

The \emph{footprint} of $m$ is defined in a similar way, that includes the security level \plvl{π} of the returned
value $\pi$, if any, and the set of variables used to encode the reachable portions of the heap $\hfvar m$ when the
call terminates, \ie
$\SumFootprint m ≝
{\plvl{π}
}
\cup {{\hfvar m}}.$


The summary guard \SumGuard m only involves the variables from the support \SumSupport m
which holds if no execution of \(m\) leads
to a violation of any invariant on its semantics
.
In turn, the effect \SumEffect m
comprises one assignment for each variable in the
footprint \SumFootprint m, \ie it
assigns each component of the security typing
environment
with an expression that gives an \emph{upper-bound} on the
corresponding security level \emph{upon termination of \(m\)}---and
similarly for any over-approximated heap-related relation.

\vspace{-0.1in}
\subsection{Computing summaries}
\label{sec:meth-preambl-synth}
We summarize the overall procedure to compute summaries 
in \algorithmcfname~\ref{alg:synthesis.algorithm} that we elaborate in this section.

\begin{figure}
	\newcommand\Foot{\ensuremath{F}\xspace}%
	\begin{ruleset}
		\myfreerule{\OutputRule}{%
		}{%
			\semloc{\s{\coutput l (x)};\stms ~}\trans{\tt, ∅}\newloc{~\stms}
		}%
		\hspace{2em}%
		\myfreerule{\SinkRule}{%
		}{%
			\semloc{\surd}%
			\trans{\tt, ∅}~ℓ
		}
		\\[\myrulespace]%
		\myrule{\ReturnRule}{%
			\begin{array}{@{}c@{}}
				T = 
					［\sVar \assign \sSpec x］ \text{~if~} \stm = \s{\creturn~x}\mbox,~
     ∅ \text{~otherwise}
			\end{array}
		}{%
			\semloc{ \s{\creturn~[\_]};\stms~}~\trans{\tt, T}\newloc{\surd}%
		}
	\end{ruleset}
	\smaller
	\setlength{\belowdisplayskip}{-3pt}%
	\begin{align}
		φ_\rname{{Sink}}= & ~\pc ⊑ \low ∧ \plvl x ⊑ \low ∧
			(\hlvl[\hvar] x ⊑ \low \text{~if~} x ∈ \MethRefs m\mbox,
			\tt \text{~otherwise})
		\tag{\normalsize\rname{φ-Sink}}\label{eq:invariant4sink} \\
		φ_\rname{Exit} ≝\,
		&
		φ_x
		∧ φ_{\hvar{}}  \tag{\normalsize\rname{φ-Exit}}\label{eq:invariant4exit} \\
		φ_x ≝\,&
		⋀_{\quad\mathclap{\s{\creturn~x} ∈ \MethBody m}}
		{（\sVar = \sSpec x ⇒ {\pc ⊔} \plvl x ⊑ \retVar{{\plvl{{\pi}}}} 
			）}\tag{\normalsize\rname{φ-RetVal}}\label{eq:invariant4exit.value} \\
		φ_{\hvar{}} ≝\,
		&
		⋀_{\quad\mathclap{\s{\creturn~r} ∈ \MethBody m}}
		{（\sVar = \sSpec r ⇒ \hvar[π←r] ⊑_{\mathsf H} \retVar{\hvar{}}）} \tag{\normalsize\rname{φ-Heap}\label{eq:invariant4exit.value}}
		\\&
		∧ （\hvar ⊑_{\mathsf H} \retVar{\hvar{}} \text{~if~} ∄\s{\creturn~r} ∈ \MethBody m
		\mbox,~\tt \text{~otherwise}）
		\notag
	\end{align}
	\caption{Rules for inferring summaries.}
	\label{fig:method-preambles-n-return}
\end{figure}

%
\begin{algorithm}[t]
\smaller
  \KwIn{Method to analyze \(m\)}
  \KwResult{Polymorphic IF-summary \(\MethSum m = 〈\SumEffect m, \SumGuard m〉\)}
  \Comment{Encode the semantics of \(m\) as an SCFG \(S_m\) where every state variable related to the calling context is \emph{left uninitialized}, and express the security requirement as a predicate \(φ_m(ℓ)\) on state variables for each location \(ℓ\) of \(S_m\):}
  \((S_m, φ_m) ← \SecuritySemantics[]{m}\)\;
  \Comment{Define all known unsafe states, that include every state where {safety properties} are violated ($Λ_m$ is the set of $S_m$'s locations):}
  \(\B_0 ← ｛ℓ ⟼ ¬φ_m(ℓ)｜ℓ∈Λ_m｝\)\;
  \Comment{Compute all states that are co-reachable to \(\B_0\):}
  \(\B_∞ ← \Coreach{S_m,\B_0}\)\;
  \Comment{Factor out the state variables that are not part of the calling context {(so \(G\) only involves variables from \(\SumSupport m ∪ \SumFootprint m\))}:}
  \(G ← \cofactor{¬\B_∞(ℓ_0(S_m))}{X_0(S_m)}\)\;\label{alg-line:cofactor-init}
  \(〈\SumEffect m, \SumGuard m〉 ← \Triang G {\SumFootprint m} {∅}\)\;
  \caption{\InferSummary{}{}\label{alg:summary-synthesis}}
  \label{alg:synthesis.algorithm}
\end{algorithm}

\subsubsection{Semantics Rules and Invariants}
To compute an summary for a method \(m\),
\SecuritySemantics[]{m} in our algorithm constructs an SCFG \(S_m\) that specifies its semantics in addition to an invariant mapping \(φ_m\) for each location.
\figurename~\ref{fig:method-preambles-n-return} introduces the semantics of inter-procedural statements and the invariants.

\begin{leaveout}
{Our technique for capturing the implicit flows of information that are
	induced when a branch subject to a high condition is \emph{not} taken,
	consists in encoding an alternation between \emph{nominal} executions,
	and \emph{upgrade analyses of every possible execution paths within
		CDRs}.
	In nominal mode, the updates to security levels reflect every
	information flow except {indirect implicit} flows,
	\ie only explicit and {direct}
	flows 
	are taken into account.
	During upgrade analyses, however, the updates to security levels only
	capture information flows from the execution context \pc.
	We implement this behavior with the following constructions to
	seamlessly distinguish 
	the nominal executions (when state variable \uVar does not hold) and
	upgrade analyses (\ie \uVar holds):%
	\begingroup%
	\setlength{\abovedisplayskip}{1pt}%
	\setlength{\belowdisplayskip}{3pt}%
	\begin{align}
		\nomlvl l &\ ≝\ \phantom{\plvl x \assign\,}（\ite{\uVar 
		}{⊥}{l}） ⊔ \pc
		\label{eq:level-cap} \\
		\plvl x \assignv l &\ ≝\ \plvl x \assign （\ite{\uVar 
		}{\plvl x}{l}） ⊔ \pc
		\label{eq:assignv}
	\end{align}
	\endgroup%
	In nominal mode, \nomlvl l encodes the least upper-bound between \(l\)
	and the context level \pc, and \(\plvl x \assignv l\) models a
	\emph{strong update} of the security level assigned to \(x\) with
	\nomlvl l.
	In upgrade analysis mode, however, \nomlvl l is equal to the context
	level, and \(\plvl x \assignv l\) encodes a \emph{weak update} of
	\plvl x with \pc.}
\end{leaveout}
To ensure confidentiality, we associate an objective invariant $φ_\rname{{Sink}}$ with every sink statement leading to the application of the \OutputRule rule. This property avoids any breach of confidentiality by ensuring that the security level of any scalar data or referenced object output to a public channel is not high-sensitive.
This property additionally
ensures that an output can only be performed in a low-sensitive context
by asserting \(\pc ⊑ \low\) to avoid leakage via context.
For instance, $φ_{ℓ_4}$ in \figurename~\ref{fig:overview} shows an invariant defined for the sink location ${ℓ_4}$ that outputs $\cv s$. This property requires that the security level of \cv s and the context be low.

The rule \ReturnRule defines the semantics of \creturn statements, which
basically records in \sVar the \emph{variable} of which the value is returned, if any, and then
jumps to the method {exit point}
{; \ie \sVar can take as many values as there are distinct \creturn statements in the  method}.
We associate the location
\(ℓ_\surd\)  to the  {exit} point.
To construct the summary effect, we
define an {invariant} \(φ_{\rname{exit}}\) for this location, that enforces a \emph{lower-bound} on 
every variable in the footprint \(\Footprint m\).
{In \figurename~\ref{fig:method-preambles-n-return}, the predicate $φ_x$ states that, if the method returns a value or reference $x$, \ie \(\sVar = \sSpec x\), its upper bound with \pc should be subsumed by the initial level of the returned value, denoted $\plvl{\retVar\pi}$.}
{When this predicate is propagated backwardly to the initial state along every possible path by the co-reachability algorithm, the upper-bound \(\plvl{\retVar\pi}\) will account for the updates to the returned value.}

Similarly, the heap abstraction \retVar{\hvar{}} is an
upper-approximation of the actual heap abstraction \hvar upon
termination of the method, and we define the invariant \(φ_{\hvar{}}\) to obtain the updates of the method to the heap model.
Depending on whether the method returns a
reference or not,
\(\hvar[π←r]\) denotes {the extension of the  heap abstraction} \hvar with a reference π 
that is an alias of \(r\).
Further, \(\hvar ⊑_{\mathsf H}\retVar{ \hvar}\) denotes a partial order that
holds whenever the abstract value \hvar, restricted to the set of
references \(R' ⊆ R\) pertaining to \(\retVar{\hvar}\), over-approximates a
set of heap configurations,
that is greater  than or equal to the set of heap configurations
represented by \(\retVar{ \hvar}\). This amounts to a pairwise comparison of security levels, \ie one
must have \(\hvar ⊑_{\mathsf H} \retVar{\hvar} ⇒ (∀r∈{R'}, \hlvl[\hvar]r ⊑
\hlvl[\retVar{\hvar}]r)\) (and similarly for any over-approximated fact about
the heap, like a may-alias relation).

{For example, the invariant $φ_{ℓ_\surd}$ defined  for the exit location ${ℓ_\surd}$ in  \figurename~\ref{fig:overview} requires the level of \cv e be subsumed by the initial level of the return value $\pi$. Note that this method has a single return statement, and $\sVar = \sSpec e$ always holds. Additionally, the references  \cv \this and \cv r should not be upgraded upon return, and \cv \this and \cv r remain aliased if they initially aliased each other (\ie \(\hvar ⊑_{\mathsf H} \retVar{\hvar{}}\)).}

\subsubsection{Summary Construction via Co-reachability}
Our approach follows a co-reachability analysis to compute initial states from which the method $m$ does not reach undesired states.  We represent the state space as mappings from locations to predicates on state variables.  We denote $\B_0$ the set of all initially known undesired states, that violate the invariants defined in the semantics (\eg $φ_\rname{{Sink}}$ and $φ_\rname{{Exit}}$ shown in Figure~\ref{fig:method-preambles-n-return}), and should be avoided.
Then, the set of undesired states is back-propagated via a standard co-reachability analysis (embodied by \Coreach{}), that finds all states from which the set of states \(\B_0\) may be reached.
\green{
 This is typically solved using a fixed-point~\citep{Ramadge89,PnueliRosner}.
On a symbolic finite-state system like $S_m$, this computation always terminates, and is traditionally performed using the least fixed-point (\(\mathrm{lfp}\))
{%
\setlength{\abovedisplayskip}{1pt}%
\setlength{\belowdisplayskip}{1pt}%
\begin{equation}
\B_∞ ≝ \mathrm{lfp}\ λ\B_i.\B_0 ∪ \mathrm{pre}(\B_i)\mbox,
\end{equation}%
}%
where \(\mathrm{pre}(\B)\) gives all predecessor states of \B.
}
The resulting mapping
\(\B_∞\) associates each location $\ell$ with a predicate that must \emph{not} hold for every subsequent path from $\ell$ in \(S_m\) to {represent desired executions}.
The predicate \(\B_∞(ℓ_0)\) describes the set of initial states that can reach undesired states \(\B_0\).
Therefore, its complement \(¬\B_∞(ℓ_0)\) is the set of all path conditions that lead to desired executions.

To obtain a guard and an effect for the method, we need to eliminate from \(¬\B_∞(ℓ_0)\) every symbolic variable that does not relate to its calling context.
\green{ In other words, we want to eliminate the part of the security typing environment and heap abstraction that relates to \emph{local} primitive or reference variables.
In the initial states of $S_m$, denoted by $X_0$, \emph{local} primitive and reference variables are set
to low-sensitive, and
\emph{local references are initialized to \Null in the heap abstraction \hvar}.
On the contrary, the security level of every formal argument and \pc is left \emph{unconstrained} in $X_0$.
To perform the aforementioned elimination, we can therefore partially evaluate the complement of  \(\B_∞(ℓ_0)\) {against} the initialization predicate $X_0$.
}
This is done with 
\(\mathrm{cofactor}(¬\B_∞(ℓ_0), X_0)\), which
takes every equality \(x = c\) such that \(X_0 ⇒ (x = c)\), where \(x\) is a variable and \(c\) is a constant, and substitutes \(c\) for \(x\) in \(¬\B_∞(ℓ_0)\).
{This results in a predicate $G$ expressed over the variables of the method support and the method footprint, i.e., $\SumSupport m \cup \SumFootprint m$. We then use a \emph{triangularization} technique~\citep{Hietter-triang} to transform $G$ into an
	summary for $m$.}
\green{
\subsubsection{Triangularization}
The  co-factorization in Algorithm~\ref{alg:synthesis.algorithm} results in a predicate $G$ expressed over the variables of the method support and the method footprint, i.e., $\SumSupport m \cup \SumFootprint m$. We then use a \emph{triangularization} technique~\citep{Hietter-triang} to transform $G$ into an
summary for $m$.
By using this algorithm, we compute the set of assignments \SumEffect m incrementally, by successively: (i) identifying an assignment \(v \assign e\) such that \((v = e) ⇒ G\), for a footprint variable \(v\), (ii) substituting \(v\) by \(e\) in \(G\).
Once this has been performed for every variable of the footprint, \(G\) only involves variables from the support $\SumSupport m$ and constitutes a guard \SumGuard m for the method.
The function $\Triang{G}{V}{E}$, that constructs an summary (\ie pair of guard and effects) from $G$  defined as follows:
\begin{multline*}
	\smaller
\Triang G V E ≝ \\
\begin{cases}
\Triang {\subst G v e} {V\backslash｛v｝} {E ∪ ［v \assign e］}
& v ∈ V\mbox, \\
〈E, G〉 & \text{if~} V = ∅
\end{cases}
\end{multline*}%
where \(e = \Minimize G v\) is a symbolic expression \(e\) that only involves variables in \(G\) and minimizes \(v\) subject to \(G\); \ie one notably has \((v = e) ⇒ G\), {and there does not exist an expression \(e'\) that is not equivalent to \(e\) and such that: (i) \((v = e') ⇒ G\), and (ii) \(e' ⊏ e\) is satisfiable (if \(v\) is a level variable)}.
The search for "minimal" expressions in the triangularization is therefore a key factor in obtaining \emph{precise} effects in summaries.
{For instance, updates that assign ⊤ to level variables of the footprint, albeit safe upper-approximations, would not constitute precise-enough descriptions of the methods' effect.}

Note, however, that there may exist multiple summaries that can be derived from a single predicate \(G\) obtained in Algorithm~\ref{alg:synthesis.algorithm}.
Indeed, the resulting effect is impacted by the order in which each variable from the footprint is selected in \(\Triang{}{}{}\).
Moreover, the domain of footprint variables may have multiple minimal elements, and the selection of a valid symbolic expression \(e\) in \Minimize{}{} may be subject to some arbitrary choice as well.
In our implementation, we define arbitrary but deterministic, non-ambiguous priorities over the full range of variables and their respective domains.
Most notably, the triangularization is parameterized with a sequence of \emph{total	orders} \(（\prec_v）\) for every variable \(v ∈  \SumFootprint m\).
}
\begin{leaveout}
{
\begin{example}[Computing a summary for \Cbar]
\label{exmpl:sum-c-bar-computation}
Assuming the summary of \Cfix given in Example~\ref{expl:summry-4-fix}, the process described above results in the following summary for \Cbar:
	\begin{align*}
		\SumGuard\Cbar = &\ \cv{p0} ⊔ \pc ⊔ \cplvl b ⊔ \cplvl r = ⊥ \\
		\SumEffect\Cbar = &
                \left[\begin{array}{@{}ll@{}}
                      \chlvl{this} \assign \cplvl{this} ⊔ \chlvl{this},&
                      \chlvl r \assign \cplvl{this} ⊔ \chlvl r ⊔ \chlvl{this},\\
                      \cAliasRelProp r {this} \assign \tt,&
                      \plvl{π} \assign \cplvl{this} ⊔ \chlvl r ⊔ \chlvl{this}
                      \end{array}\right].
	\end{align*}
	The guard of this summary indicates that a call to \Cbar satisfies our {objective confidentiality property} if this call happens if the source returns low-sensitive data, in a low context (\(\pc = ⊥\)), the values of its effective arguments \cv r and \cv b are low-sensitive (\(\cplvl b ⊔ \cplvl r = ⊥\)). 
	Furthermore, if \Cbar executes and terminates (\ie \SumGuard\Cbar holds upon call), then: (i) references \cthis and \cv r may share objects (\(\cAliasRelProp r {this} = \tt\)), (ii) the portion of heap pointed to by \cthis may have been upgraded with information from the reference \cthis, (iii) the portion of heap pointed to by \cv r may have been upgraded with information from either (or both) the reference \cthis and objects pointed-to by \cthis, and (iv) the returned value may hold information from \cthis and/or portions of heap reachable via \cthis and/or \cv r.
\end{example}%
}
\end{leaveout}

\begin{figure}
	\begin{ruleset}
		\myrule{Call}{%
			\begin{array}{@{}c@{}}
				\semloc{\stm = \s{[\_ =] r.m(\lits)}} \quad \quad
				T_{\hvar}' = \HeapCappedJoinProj{\SumEffect m}{\nomlvl{⊤}}{\hvar} \qquad \\[1.2ex]
				T_{\hvar} = T_{\hvar}'[π→r'] \text{~if~} \stm=\s{r' = r.m(\lits)}, T_{\hvar}' \text{~otherwise} \\[1.2ex]
				T_{π} = [\plvl x \assignv \SumEffect m(\plvl{π})] \text{~if~} \stm=\s{x = r.m(\lits)}, ∅ \text{~otherwise}
			\end{array}
		}{%
			\begin{array}{@{}r@{~}l@{}}%
				ℓ =
				{\stm };\stms \quad%
				&\trans{\tt, T_{π} \SQMergei u T_{\hvar}}\newloc{\stms}\\
			\end{array}
		}
	\end{ruleset}
	\smaller
	\setlength{\abovedisplayskip}{0pt}%
	\setlength{\belowdisplayskip}{-3pt}%
	where
\begin{align}
	\nomlvl l &\ ≝\ \phantom{\plvl x \assign\,}（\ite{\uVar 
	}{⊥}{l}） ⊔ \pc
	\label{eq:level-cap} \\
	\plvl x \assignv l &\ ≝\ \plvl x \assign （\ite{\uVar 
	}{\plvl x}{l}） ⊔ \pc
	\label{eq:assignv} \\
		φ_{\textsc{Call}} ≝~
& 
\SumGuard m'
\tag{\larger\rname{φ-Call}}\label{eq:invariant4call}
\end{align}
	\caption{Translation rule and objective invariant for method calls
		.}
	\label{fig:method-calls}
\end{figure}

\section{Inter-procedural Security Analysis with  Summaries}
\label{sec:interprocedural}


To obtain a full inter-procedural analysis, we extend our semantics by re\-using summaries
in place of method calls, and detail our technique for modular computation of summaries.
We explain how we handle potential recursion in a subsequent Section.

		\vspace{-0.1in}
\green{\subsection{Call\-graph Construction}
We first compute an over-approximation of the call\-graph for the entire application.
To this end, we need to identify the methods that could be potentially called at any call-site \(x = r.m(\lits)\).

\newcommand{\types}{\ensuremath{\mathbb{T}}\xspace}
\newcommand{\type}{\ensuremath{\mathsf{Typ}}\xspace}
\newcommand{\retype}{\ensuremath{\type_{\mathsf{ret}}}\xspace}
\newcommand{\basetype}{\ensuremath{\type_{\mathsf{recv}}}\xspace}
\newcommand{\argtype}[1]{{\ensuremath{{\type_{a,#1}}}\xspace}}
\newcommand{\typesRel}{\ensuremath{\prec}\xspace}

Let the pair $\langle \types, \typesRel \rangle$ represent a \emph{type hierarchy} for the whole application where $\types$ is the set of types, i.e., the set of classes, interfaces and primitive types, and $\typesRel$ is a partial ordering on $\types$ showing the inheritance relations among the object types. For the types $t, t' \in \types$, the relation $t \typesRel t'$ holds if (i) $t$ and $t'$ are classes and $t$ inherits from $t'$ by extending it, or (ii) the class $t$ implements the interface $t'$.
We further represent a method signature by $V = \langle 
\basetype, \mathit{name}, \argtype 1, \ldots, \argtype n \rangle$ where 
$\basetype$ is the type of the receiver reference, $\mathit{name}$ is the method name, and $\argtype i$ represents the type of the $i$-th argument.

A method $m$ with signature $\langle 
\basetype, m, \argtype 1, \ldots, \argtype n \rangle$ is a potential target for a method call $x = r.m(w)$ if and only if the following conditions hold:
\begin{itemize}
    \item $\type(r) \typesRel \basetype$,
    \item $|w| = n$, and
    \item $\type(w_i) \typesRel \argtype i$ for $0 \leq i \leq n$
\end{itemize}
For static calls (\ie $r$ is a class name), $\type(r) = r$. 
We analyze the entire application and build an over-approximated call\-graph whose edges show potential target methods for a method call.
}

\subsection{Re\-using Method Summaries}
\label{sec:re-using-summaries}
{We first compute an over-approximation of the call\-graph for the entire application.}
%

We combine the set of reusable summaries already computed for a call-site to {encode} method calls.
From the call\-graph, we can obtain (an over-approximation of) the set of every method that may be targeted at any call-site \(r.m(\lits)\).
We can therefore obtain the corresponding summaries
.

Let \TargetSummaries m be the set of summaries of all methods that can be dispatched for a method call $r.m(w)$.
We first take this set of summaries, and substitute the formal arguments for the corresponding effective parameters \(r\) and \lits; this produces a set of summaries that all have identical supports and footprints.
{Then, to obtain a single summary to use in place of \(r.m(\lits)\), we combine the summaries' guards and effects. The combined guard should avoid insecure states for any possible target method, and the combined effects should include the effects of all possible target methods. Therefore, we define the combined guard as the conjunction of the summary guards of all possible target methods, and the combined effect as an over-approximated union of their effects.}
\begin{leaveout}
The symbolic structure of summaries makes this operation straightforward, primarily since, thanks to the above renaming, the summaries to combine have identical supports and footprints. Thus, joining a set of summaries boils down to:
\begin{enumerate*}[(i)]
	\item construct the \emph{conjunction} of all the guards: this is required since our analysis domain does not capture object types and is therefore unable to handle virtual dispatch precisely;
	\item join their effects to produce an over-approximation of effects using the \emph{merge operation} \SQMergei{}.
\end{enumerate*}
\end{leaveout}
Therefore, the combined summary to be used in place of $r.m(w)$ is defined as
{
  \small
  \begin{equation*}
    \langle\SumGuard m, \SumEffect m\rangle ≝
    \langle%
    \underset{\langle\SumGuard{}, \_\rangle ∈ \mathrlap{\TargetSummaries m}}{⋀} \SumGuard{},
    \underset{\langle\_, \SumEffect{}\rangle ∈ \mathrlap{\TargetSummaries m}}{\SQMerge u}\SumEffect{}%
    \rangle%
    \mbox.
  \end{equation*}
}%
To account for %
the implicit branching behavior that is induced by virtual method
dispatch on the actual class of the target object
, we further 
upgrade the context level \pc by substituting it in the guard and
effect as follows:
{
	\begin{equation*}
		\SumGuard m' ≝ \begin{cases}
			\subst{\SumGuard m\!}{\pc}{\pc ⊔ {\hlvl[]r}} & \mbox{if~} |\TargetSummaries m| > 1 \\
			\SumGuard m & \mbox{otherwise}
		\end{cases}
\end{equation*}}%
where {$\hlvl[]r$ is the security level of the receiver object (pointed-to by \(r\))} and \subst e v l denotes the substitution of security level
expression \(l\) for variable \(v\) in \(e\).
The substitution in effects is performed in every expression 
on the right-hand side of assignments.

{
In order to capture implicit flows, we consider two execution modes determined by a state variable \uVar: (i) in nominal mode (\uVar = \ff), updates to security levels reflect explicit information flows, and (ii) in upgrade analysis mode (\uVar = \tt), the information flow from the high execution context pc to all the relevant branches are captured.
}
The rule \rname{Call} given in
\figurename~\ref{fig:method-calls} encodes the semantics of method calls
in a semantic location ℓ, where the invariant \ref{eq:invariant4call}  ensures that there is no illegal flow by the method $m$, \(T_{π}\) updates the security type of any assigned returned value or
reference $x$ to the security type of return value $\pi$ in {\SumEffect m},
and \(T_{\hvar}\) applies the method effect on the heap model of the caller.
{We apply the function \HeapCappedJoinProj{\SumEffect m}{\nomlvl{⊤}}{\hvar} to incorporate the implicit flows to the heap,
}
where \HeapCappedJoinProj{\mathit{eff}}
l \hvar constructs a new update function that applies the update function \(\mathit{eff}\)
on the heap model \hvar
, in such a way that every security level upgrade by
\(\mathit{eff}\) is capped with  the security level expression \(l\).
The correctness of this construction is due to the fact that
 the update encoded by 
 effects to the security level of a
 potentially mutated object is always lower or equal than the context
 level \pc ---unless the guard does not hold in high-context, in which
  case the method call is made unreachable via the enforced invariant.
 In other words,
\pc is an upper bound on the level assigned by
\(m\)'s effect to
the security level associated with
any portion of heap that is potentially mutated by \(m\).
Hence, since \(\nomlvl{⊤} = \pc\) in upgrade analysis mode---\cf
Eq.~(\ref{eq:level-cap})---, every component of the heap typing
environment that corresponds to a potentially mutated portion of heap
is weakly upgraded with the context level.
In nominal mode, however, the assignments to the typing environment
are performed directly since \(\nomlvl{⊤} = ⊤\).

\green{
Note that the heap abstract domain that constitutes the footprint of
the applied effect corresponds to a subset of all reference variables
in the caller method \(m'\).
Therefore, the capped application operation 
needs to pessimistically propagate every level upgrade and alteration
of heap-related relation to references 
that may point
to portions of heap that are (indirectly) reachable by reference
arguments in \lits.
%
The remaining transformation performed by the rule \rname{Call} to
obtain \(T_{\hvar}\) essentially consists in 
renaming variables to substitute the return symbol π for any reference
variable \(r\) assigned upon \(m\)'s termination.
\nbrem[yep]{At last, if \(m\) returns a reference that is assigned to
	a variable \(r\), then \(T_{\hvar}\) is obtained as
	\(T_{\hvar}'[π→r]\), that:
	\begin{enumerate*}[(i)]
		\item discards any assignment to a symbol variable that is made of
		reference \(r\) (\eg any assignment \(\hlvl[\hvar] r \assign e\),
		or \(\aliasRel[\hvar] r s \assign e\));
		\item substitutes any return symbol π for \(r\) in the remaining
		assigned variables (\eg an assignment \(\aliasRel[\hvar]{π}s \assign
		e\) becomes \(\aliasRel[\hvar] r s \assign e\)).
\end{enumerate*}}
}
\begin{leaveout}
	\begingroup%
\begin{example}[Reusing \MethSum \Cfix]
	\label{exmpl:reusing-sum-c-fix}%
	We have represented in \figurename~\ref{fig:c-bar-example-scfg} the
	update function \(T_\fix\) and invariant \(φ_{ℓ_4}\) that correspond
	to the method call statement in location \(ℓ_4\).
	Assuming that the target of this method call consists in a single
	implemented method whose summary is the one given in
	Eq.~(\ref{eq:c-fix-example-summary}) for \Cfix, \ie
	\(\TargetSummaries{\stm_4} = \big\{\MethSum\Cfix\big\}\), then one
	obtains \(φ_{ℓ_4} = 
	\SumGuard\Cfix\) and \(
	T_{\fix} = \HeapCappedJoinProj{\SumEffect
		\Cfix(\hfvar\fix)}{\nomlvl{⊤}}{\hvar}\).
	In this particular instance, \SumEffect\Cfix does not require any
	renaming, and \(T_{\fix}\) is actually derived via the capped
	application as:

	\footnotesize{
	\[
	T_\fix = ［%
	\begin{array}{@{}r@{~\assign~}l@{}}
		\hlvl\this & \hlvl\this\,⊔\,\ite{\!\plvl\this ⊑ \nomlvl{⊤}}{\plvl\this}{\nomlvl{⊤}} \\
		\hlvl r & \hlvl r\,⊔\,\ite{\aliasRel r \this}{\ite{\!\plvl\this ⊑ \nomlvl{⊤}}{\!\plvl\this}{\nomlvl{⊤}}}{⊥}
	\end{array}%
	］\mbox.
	\]
}
\end{example}
\endgroup%
\end{leaveout}

\subsection{Full Summary-based Program Analysis}
\label{sec:full-bottom-up}

Let us denote \Summary m D the procedure given above for computing the
summary of an implemented method \(m\) based on a set \(D\) of
summaries computed for every method it may call.
Then, computing a correct summary for every implemented method in a
program boils down to 
searching a
solution 
to the following system of equations:
\begin{equation}
\label{eq:full-fixpoint}
\begin{cases}
\begin{aligned}
	\MethSum {m_1} &= \Summary{m_1}{\{\MethSum m~|~{m ∈ \mathsf{Deps}（m_1）} \}} \\[-1.2em]
	&\shortvdotswithin{=}\\[-1.6em]
	\MethSum {m_n} &= \Summary{m_n}{\{\MethSum m~|~{m ∈ \mathsf{Deps}（m_n）} \}}
\end{aligned}
\end{cases}
\end{equation}
where \MethDeps m gives the set of methods that is potentially called
by \(m\).
Each unknown variable in Eq.~(\ref{eq:full-fixpoint}) is an summary for
a 
method of the program, and 
this system naturally induces a dependency graph \(G\) where nodes are
methods; \(G\) corresponds to an over-approximation of the call-graph.
If the program under scrutiny does not feature (potentially) recursive
methods (\ie \(G\) is a\-cyclic), then one can use a simple
topological traversal of \(G\) to compute everysummary.
Yet, this solution is not applicable if \(G\) is cyclic, \ie involves
recursive methods.
To address this issue, we employ a technique that is well-established
in the abstract interpretation community, that consists in following a
\emph{chaotic iteration
strategy}~\citep{Bourdoncle:1993:ChaoticIterationStrategiesWithWidenings}
to find a fixed-point solution for Eq.~(\ref{eq:full-fixpoint}).

This solution is sound if \Summary{m}{⋅} is monotonic for any fixed
\(m\). 
 The summary that we compute for each method $m$ belongs to a
 domain that can additionally be equipped with a least element and a
 partial order 
 to form a \emph{lattice of summaries} \(\langle\Summaries m, ⊑_{\Summaries m}
 \rangle\).
We informally posit that the {possible  summaries for $m$} are ordered according
 to decreasing levels of permissivity, so that the least element
 corresponds to the most permissive summary.
The bottom element of this lattice corresponds to a
summary 
with a guard that
always holds (it is \tt), and an effect that 
returns a low-level value, if any, and does not perform any
object mutation.
We prove that the procedure \Summary{m}{.} is monotone \wrt the
 lattice  \(\langle\Summaries m, ⊑_{\Summaries m} \rangle\).
 Note that relying on a chaotic iteration strategy may also require the
definition of a suitable \emph{widening operator} that merges
summaries in a way that accelerates stabilization to the fixed-point
(or force convergence if \Summaries m is infinite
)
.
\nbrem[yep]{If summaries are built out of symbols defined on finite
domains, then the widening operator can actually be
constructed 
by lifting widening operators for guards and assigned expressions in
effects, such as the ones defined
by~\citet{Mauborgne1998AbstrInerprWithBDDs} for 
BDDs.}
Note, though, that our current implementation does not involve the
above widening step to accelerate convergence, and relies on the
finiteness of the lattice \Summaries m instead.
 We will see in Section~\ref{sec:experimental-results} that this choice
 is comforted by our empirical experiments.
\nbrem[yep]{Indeed, our empirical experiments suggest that, with our current
choice of symbolic heap abstract domain, for a small number of methods
only, very few additional applications of the \Summary{}{} procedure
are required to find a solution.}

 \begingroup%
{
In our running example, assume that \flushBits may call \cv{FileImageOutputStream}.\Write. Then, the
 process for computing a summary for \Write roughly 
 consists in assuming a most permissive summary, say, \(\MethSum\flushBits^0\) for \flushBits,
then compute an summary for \Write by using \(\MethSum\flushBits^0\), and then
compute summaries for each method until stabilization, one after the
other:
  \begin{enumerate}[(1),nosep,leftmargin=*,labelindent=0pt]
  \item generate a most permissive summary \(\MethSum\flushBits^0\) for \flushBits;
  \item compute \(\MethSum\Write^0 = \Summary\Write{｛\MethSum\flushBits^0,\MethSum{{\cv{RandomAccessFile}.\Write}}^0｝}\);
  \item compute \(\MethSum\flushBits^{i+1} = \Summary\flushBits{｛\MethSum\Write^i｝}\) until stabilization, \ie \(\MethSum\flushBits^{i+1} = \MethSum\flushBits^i\);
  \item compute \(\MethSum\Write^{i+1} = \Summary\Write{｛\MethSum\flushBits^{i+1},\MethSum{{\cv{RandomAccessFile}.\Write}}^{i+1}｝}\) and then repeat from step (3) until stabilization, \ie \(\MethSum\Write^{i+1} = \MethSum\Write^i\).
  \end{enumerate}
}
 \endgroup%

{To prove that our approach for summary computation is sound, we need to prove that our procedure \Summary {m}{.} is monotonic  \wrt \( \langle\Summaries m, ⊑_{\Summaries m} \rangle \).
Let \(\MethSum{} = \langle\SumGuard {}, \SumEffect {}\rangle\) and  \(\MethSum{}' = \langle\SumGuard {}', \SumEffect {}'\rangle\) be two summaries.
We state \(\MethSum{} \not\sqsubseteq_{\Summaries m} \MethSum{}'\) if and only if $\SumGuard {}'$ is more permissive than \SumGuard {}, \ie \(\SumGuard {}' (s) \implies \SumGuard {}(s) \) for all states $s$.

}
\begin{property}
	Given a method \(m\), \Summary{m}{⋅} is monotonic \wrt \( \langle\Summaries m, ⊑_{\Summaries m} \rangle \).
\end{property}
\begin{proofSketch}
	{See  Appendix~\ref{sec::summary.computation.proof}.}
\end{proofSketch}


\section{Formal Correctness}
We prove that any method guarded with its IF-summary guard guarantees termination-insensitive non-interference~\citep{sabelfeld2003language}.
This notion states that, for any initial  states $s$ and $s'$ whose secret parts may only differ, the observations sequence of the program running from the states $s$ and $s'$ will either be the same,
or one is a prefix of the other.
The reason for the latter case is that this notion is a termination-insensitive property.
In addition to the symbolic variables that belong to the semantics for each called method (\ie the heap model and the security typing environment), we also need to consider an extended model of the \emph{heap} used in the execution, as well as the \emph{call\-stack} with values for local variables.
Let \(\S \) denote the (symbolic) full semantics of a program,
and $\FSM\S=\FSMdef$ be an automaton that describes its extended
semantics,
where \smStates is the set of states, $\mathcal{I}$ is the set of inputs,
$\smTrans \subseteq \smStates \times \smStates$ is the set of transitions and $\smStates_0$ is the set of initial states.
A program
state \(\qstate \in \smStates\) is defined as \( \methFrame 
:: \mStack \) where \(\methFrame\) is the  frame of the currently executing
method, and \mStack is the stack of calling contexts.
A method frame
is a tuple \(\mframeX  {ℓ} \pval    {\mHeapVal} {\X} \) where
ℓ
is the current location,
\pval   is the
valuation the
method's primitive variables $\PrimVars$,
\mHeapVal is the valuation of the extended heap's variables \havar,
and \(\X
\)
is the current valuation of the state variables in the semantics. \X consists of  (i)
{\(\VarsTypes: \PrimVars \cup \red{R}
	→ \LL\)} to show the \emph{security typing environment}
for {the primitive variables $\PrimVars$ and references $R$},
and (ii) $\HeapTypes: {R} → \LL$  to denote objects security typing environment.
We then define non-interference based on a \emph{low-equivalence relation} between two states $\qstate_1$ and $\qstate_2$, that states the public parts of the states  are indistinguishable.

	We say  two valuations $\pval  _1 $ and $\pval  _2 $ of primitive variables are \emph{indistinguishable},
	denoted by \(\pval  _1 =_{\low} \pval  _2\),
	iff $\pval  _1(v) = \pval  _2(v)$ for all $v ∈ \PrimVars$ where \(\plvl v=\low\).
We extend this notion to portions of heaps reachable by sets of references.
%
We say two extended heaps are indistinguishable, if heap-related relations and  primitive fields of their low-sensitive references are identical, \ie (i) the reference graphs corresponding to their low-sensitive portions of the heaps are \emph{isomorphic}, and (ii) the valuation of primitive fields of their low-sensitive references are~identical.
A reference graph shows the heap relations among references.

We use the notation $\qstate(a)$ to show the value of $a$ in the state $\qstate$.
Furthermore, we say   {$\qstate_1(\HeapTypes) \sqsubseteq \qstate_{2}(\HeapTypes)$}, if and only if for all $r \in R$, $\qstate_1(\hlvl[] r) \sqsubseteq \qstate_2(\hlvl[] r)$.
We say two states \(\qstate_i= \mframeX{ℓ_i}{\pval  _i}{\mHeapVal_i}{\X_i}::\eta_i \)
	\(i \in \{1,2\}\)
	are \emph{compatible}, denoted by $\qstate_1 \approx \qstate_2$, iff
	\(\qstate_1(ℓ)=\qstate_2(ℓ)\),
	\(\qstate_1(\pval)=_{\low}\qstate_2(\pval)\),
	\({\qstate_1(\mHeapVal) =_{\low} \qstate_2(\mHeapVal)}\),
	\(\qstate_1({\VarsTypes})=\qstate_2(\VarsTypes)\),
	\(\qstate_1({\HeapTypes})=\qstate_2(\HeapTypes)\), and
	{\( \qstate_1( \mStack) =  \qstate_2( \mStack) \)}.
	Let $\qstate \xrightarrow{\beta}_* \qstate'$ be an execution of full semantics  with a non-zero length (\ie the reflexive and transitive closure of the concrete transition relation $\smTrans$) from the state $\qstate$ to the state $\qstate'$, where $\beta \in \{o,\bot\}$.
	This execution either ends by executing a statement that outputs on a channel (%
	\ie $\beta=o$)
	or {continues its execution without making an observation} (\ie $\beta = \bot$).
	We denote an execution that never reaches an observation point by $\qstate \xrightarrow{\bot}_* $.
	We  define the low-equivalence relation between two states of the full semantics~
	as:
	\begin{definition}[{Low-Bisimulation}]\label{def::low-bisimulation-relation}
		Two  states \(\qstate_1\) and \(\qstate_2\) are low-bisimilar, denoted by $\qstate_1 \lowbisim \qstate_2$, iff \(\qstate_1 \approx \qstate_2\), and  if $\qstate_1\xrightarrow{o}_* \qstate'_1$,
		then either (a) there exists $\qstate'_2$ such that $\qstate_2 \xrightarrow{o}_* \qstate'_2$
		and $\qstate_1' \lowbisim \qstate'_2$,
		or (b) $\qstate_2 \xrightarrow{\bot}_* $,
		and vice versa.
	\end{definition}

	If for any two arbitrary initial states  $\qstate_0$ and $\qstate_0'$ of a program $c$ where $\qstate_0 \approx \qstate'_0$, it holds that $\qstate_0 \lowbisim\qstate'_0$, we say  $c$ guarantees noninterference.

	\begin{theorem}[Noninterference]\label{thm::Non.Interference}
		For any method $m$ with a security summary \( \langle \SumGuard m, \SumEffect m \rangle \), and any initial states \( \qstate_1\) and  \( \qstate_2\) where
		\(\qstate_1\approx \qstate_2\) and
		$\qstate_i\models \SumGuard m$,
		$i \in \{1,2\}$, it holds that $\qstate_1 \lowbisim\qstate_2$.
	\end{theorem}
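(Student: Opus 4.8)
The plan is to establish the theorem coinductively by exhibiting a low-bisimulation $\R$ that contains the pair $(\qstate_1,\qstate_2)$ and satisfies the closure condition of Definition~\ref{def::low-bisimulation-relation}. I would take $\R$ to be the set of all pairs of compatible states $(\qstate,\qstate')$, i.e.\ $\qstate\approx\qstate'$, that in addition satisfy a \emph{soundness invariant} linking each concrete configuration to the symbolic abstraction the analysis tracks: the concrete primitive valuations, the extended heap, and the call stack conform (through the reference-graph morphism $\Morphism$) to the symbolic typing environment $\VarsTypes$ and heap model $\hvar$ at the current location, the summary guard $\SumGuard m$ holds, and no analysis invariant ($\varphi_{\rname{Sink}}$, $\varphi_{\rname{Exit}}$, and every call invariant $\SumGuard{m'}$ of \eqref{eq:invariant4call}) is violated along any continuation --- equivalently, the current symbolic state lies outside the co-reachable set $\B_\infty$ produced by $\Coreach{\cdot}$ on $S_m$. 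The hypotheses $\qstate_1\approx\qstate_2$ and $\qstate_i\models\SumGuard m$ then place the initial pair in $\R$, using soundness of the encoding $\SecuritySemantics[]{m}$ to discharge the consistency invariant at the start location.

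It then remains to show that $\R$ is closed under the bisimulation step. Suppose $(\qstate,\qstate')\in\R$ and $\qstate\xrightarrow{o}_*\qstate_*$ reaches the next observation. The output rule \OutputRule fires only when $\varphi_{\rname{Sink}}$ holds, forcing $\pc\sqsubseteq\low$ and the emitted datum to be low; since $\qstate$ and $\qstate'$ agree on all low data by $\approx$, the two runs either emit the \emph{same} observation or $\qstate'$ runs silently, which is exactly disjunction (a)/(b) of the definition. The substantive work is to show that the two executions stay in lock-step up to that observation and that the successor pair $(\qstate_*,\qstate'_*)$ is again in $\R$. I would prove this by induction on the number of concrete steps, with a case analysis on the transition rules of Figures~\ref{fig:method-preambles-n-return} and~\ref{fig:method-calls}, checking in each case that both compatibility and the soundness invariant are preserved.

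The case analysis splits on the context level. While $\pc=\low$, control flow is governed by low data, so both executions take corresponding transitions and the assignment/heap-update transformers preserve $=_{\low}$ and low-reachable reference-graph isomorphism by their soundness. The delicate case is a branch on high data, where the two runs may diverge: here I would invoke the control-dependence-region machinery to show that (i) no output can occur inside the region, since $\varphi_{\rname{Sink}}$ demands a low context while $\pc$ is high, and (ii) every variable and heap cell written inside the region is raised to high through the upgrade-analysis mode ($\uVar=\tt$) and the capped application $\HeapCappedJoinProj{\SumEffect{m'}}{\nomlvl{\top}}{\hvar}$, so the two runs reconverge at the region's junction into states that again agree on all low data and hence lie in $\R$. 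Method calls are handled by reusing the callee summary: the enforced invariant $\SumGuard{m'}$ at the call site guarantees the callee is entered only in a secure context, and soundness of the capped effect lets me substitute the callee's summary for its body while carrying the invariant across call and return.

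The main obstacle will be the reconvergence argument for high branches in combination with the heap abstraction: I must show that after two divergent high-context sub-executions the concrete extended heaps remain low-equivalent --- isomorphic low-reachable reference graphs and identical low primitive fields --- which requires the upgrade analysis to over-approximate \emph{every} object that either path may mutate, including aliased objects reachable through the reference arguments in $\lits$, and to taint them high. Establishing this rests entirely on soundness of the heap-related relations and of the $\HeapCappedJoinProj{\cdot}{\cdot}{\cdot}$ operation for implicit flows. A secondary subtlety is the interaction with the fixed-point over summaries for (mutually) recursive methods: the coinductive bisimulation absorbs nontermination of the execution, but to legitimately substitute a callee's summary I rely on its guard invariant holding for the fixed-point solution, which is precisely where the monotonicity of $\Summary{m}{\cdot}$ proved above and soundness of the chaotic-iteration solution to~\eqref{eq:full-fixpoint} are needed.
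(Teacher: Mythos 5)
Your proposal is correct and takes essentially the same route as the paper: both exhibit a witnessing low-bisimulation relation (compatibility plus a side-invariant), prove closure via a single-observation lemma by induction on the length of the execution with a case analysis on statements, match observations using the sink invariant $φ_\rname{Sink}$, and handle calls by appealing to the enforced callee guard $\SumGuard{m'}$. The only notable differences are in bookkeeping: the paper's side-invariant is the simpler condition that both states are in nominal mode with $\pc = \low$, rather than your concrete-to-symbolic conformance/co-reachability invariant (the paper instead invokes ``these states would have been avoided by the synthesized guards'' at each step, which is your $\B_∞$-avoidance made implicit), and the paper delegates the high-branch/CDR reconvergence and assignment cases that you plan to prove explicitly to the companion heap-model work~\cite{BerthierKhakpour23Heap}.
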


 \begin{proof}
See Appendix~\ref{sec::noninterference-proof}.
 \end{proof}
	%

	\newcommand\NumMeth{\#meth\xspace}%
\newcommand\NumVars{\(|\!V\!|\)\xspace}%
\newcommand\FootSize{\(|\!F\!|\)\xspace}%
\newcommand\MethLen{\(|\!m\!|\)\xspace}%
\newcommand\AToM{\(t\!/\!m\)\xspace}%
\newcommand\Processed{Processed\xspace}%
\newcommand\NumSkipped{\#Skipped\xspace}%
\newcommand\NumGuarded{\#guarded\xspace}%
\newcommand\NumUpdates{\#flows\xspace}%
\newcommand\NumSecure{\#fully secure\xspace}%
\newcommand\ClockTime{\(T_{\mathsf{wc}}\)\xspace}%
\newcommand\TimeMethodAvg{$T_\mathsf{meth}$\xspace}%
\newcommand\TotTime{\(T_\mathsf{tot}\)\xspace}%

\newcommand\TaintExpr{{\sffamily TA}\xspace}%
\newcommand\ExplicitConf{{\sffamily EFA}\xspace}%
\newcommand\ImplicitConf{{\sffamily IFA}\xspace}%

\begin{figure}
	\centering%
	\colorlet{textfg}{black}%
\begingroup
\sffamily \smaller
\color{textfg}%
\tikzexternaldisable%
\begin{tikzpicture}[>=stealth', node distance = 4mm]
  \tikzset{every node/.style={rectangle, draw = textfg,fill = white, align = center, inner sep = 1.4pt}};
  \node [rounded corners, minimum height=7em] (syrs) {\Symmaries\\(/\ReaX)};

  \coordinate [below = 2.5em of syrs.west] (x);
  \node [left=of x] (secstubs) {\strut{.secstubs}};
  \path [->
  ] (secstubs) edge (x);

  \coordinate (x) at (syrs.west);
  \node [left=of x,xshift=-2pt,yshift=2pt] {\strut{.meth}};
  \node [left=of x] (meth) {\strut{.meth}};
  \node [left=of x,xshift=2pt,yshift=-2pt] {\strut{.meth}};
  \path [->,shorten <= 2pt] (meth) edge (x);

  \coordinate [above = 2.5em of syrs.west] (x);
  \node [left=of x] (classes) {\strut{.classes}};
  \path [->] (classes) edge (x);

  \coordinate (x) at ([yshift = 1.4em]syrs.east);
  \node [right=of x,xshift=-2pt,yshift=2pt] {\strut{.secsum}};
  \node [right=of x] (secsum) {\strut{.secsum}};
  \node [right=of x,xshift=2pt,yshift=-2pt] {\strut{.secsum}};
  \path [->] (x) edge (secsum);

  \coordinate (x) at ([yshift =-1.4em]syrs.east);
  \node [right=of x] (meth-stats) {\strut{.meth\_stats}};
  \path [->] (x) edge (meth-stats);

  \node [fit=(classes)(meth), draw = none, fill = none] (classes-meth) {};


  \node [left=of classes-meth, rounded corners,minimum height=4em, align=center] (jsym) {JSymb\\(/soot)};
  \path [->] (jsym.east |- classes.west) edge (classes.west);
  \path [->] (jsym.east |- meth.west) edge (meth.west);

  \coordinate (x) at (jsym.west);
  \node [left=of x,dotted,align=left] (java-inputs) {%
    \strut{.java}, \strut{.class},\\
    \strut{.jar}, \strut{.apk}…%
  };
  \path [->] (java-inputs) edge (x);
\end{tikzpicture}%
\hrule width 0pt              
\endgroup%
 %
	\caption{\Symmaries' workflow.}%
	\label{fig:symmaries-basic-workflow}%
\end{figure}%
\section{Implementation and Evaluation}
\label{sec:experimental-results}
\green{To support the proposed approach,}
we have implemented a new toolchain whose architecture is shown in \figurename~\ref{fig:symmaries-basic-workflow}. The \textsf{JSymb} module takes an application and uses \soot~\citep{vallee2010soot} to generate its intermediate Java bytecode representation in \textsf{Jimple}, which is then translated into the input format of \Symmaries. The input includes a definition of data types (i.e., classes, fields, and inheritance relations, provided in a \textsf{.classes} file), a list of methods to analyze, and optional specification of behavior for excluded methods (e.g., native code or any other library or application method, in a \textsf{.secstubs} file).
\textsf{JSymb} is implemented in \Java with about 5K lines of code (LOC), while \Symmaries is written in \textsf{OCaml} and comprises about 18K LOC.

\green{To accommodate   potentially computationally intensive analysis, we have equipped \Symmaries with customizable timeout and memory-limitation configurations that interrupt any long-running or
memory-greedy analysis; methods skipped in this way are then treated as third-party%
		{, which
			means that they are assumed to satisfy the desired IFC property, but may only be called in low-context and accept low-sensitive outputs. }
		Note, however, that a method being skipped does not imply that it cannot be analyzed at all; rather, it often indicates that we need to  perform the analysis under a different configuration.
	}

Our evaluation is  guided by the following research questions:
	\begin{enumerate}[leftmargin=*,labelindent=0pt]
		\item \textbf{RQ1:} How precise is \Symmaries in synthesizing security summaries for Java programs?
		\item \textbf{RQ2:} How scalable is \Symmaries in handling real-life applications?
		\item \textbf{RQ3:} How usable is \Symmaries?
		\item \textbf{RQ4:}  How does it compare with similar tools?
	\end{enumerate}
	\newcommand{\LPrecise}{{\sffamily LPrec}\xspace}%
	\newcommand{\HPrecise}{{\sffamily HPrec}\xspace}%
	\vspace{-0.1in}

	\subsection*{RQ1: Precision}
	We applied \Symmaries to the Java API library (OpenJDK 8) to extract its security specifications.
	The Java API library serves as a suitable benchmark for our analysis, as it is one of the most widely used and well-established standard libraries. Almost every Java program relies on its core packages, making it a relevant target for extracting security specifications.
	In our experiments, we focused on two main packages, \texttt{java.io} and \texttt{java.util}, which cover I/O communication and data collections. We extracted all their dependencies, including those from other packages, to ensure completeness of the analysis.
    However, we excluded methods involving concurrency and reflection, as our tool does not support these features as well as native code. For all excluded methods, we provided security specifications semi-manually by consulting the Java API documentation and source code to capture their behavior.
    We also employed AI to generate  an initial draft specification in our annotation language, and then manually verified and corrected these specifications, as they contained several mistakes.
	To define the set of sinks and sources, we extracted methods that directly write to or read from external sources, either via native methods or through memory writes. In our approach, sources are symbolic, \ie their security labels are unknown and may be either high-sensitive or low-sensitive depending on the context. For example, the method \texttt{java.io.FileInputStream.readByte}, which is a native method that reads from a file input stream, is a symbolic source that can act as either low or high depending on the sensitivity of the file being read.
    Therefore, the method \cv{read}, which calls \cv{readByte}, also includes flows from \cv{p0} to \cv{r1} in its summary (\cv{objlevel\_r1}), where \cv{p0} represents the security level of the data read by \cv{readByte}. In other words, if \cv{readByte} returns high-sensitive data (i.e., \cv{p0} is high-sensitive), then \cv{objlevel\_r1} will also become high-sensitive:

    \begin{minipage}{\linewidth}
      \lstinputlisting[mathescape=true,language=meth,basicstyle=\footnotesize\ttfamily]{examples//ReadSummary}
    \end{minipage}\hfill


	We evaluated the \emph{precision} of \Symmaries, defined as the proportion of methods with correct specifications. Precision can be assessed along three aspects: the precision of security guards, the correctness of extracted information flows for each method, and the precision of inferred aliasing relations. However, due to the large number of methods requiring manual verification, we measured precision based on the precision of the security guards only. We argue that this provides a reasonable indication of the overall tool precision, as the main source of imprecision arises from over-approximation introduced by the co-reachability algorithm, which is applied once to the full program semantics. The resulting output is then decomposed into the three respective components. Hence, any over-approximation caused by the complexity of the analysis often affects all three dimensions, and we believe that studying one aspect can provide a reasonably reliable estimate of the overall precision.

    \newcommand{\skipped}{{\#skipd}\xspace}
\newcommand{\strict}{{\#strict}\xspace}
\newcommand{\aliasing}{{\#aliasing}\xspace}
\newcommand{\precision}[1]{prec#1\xspace}
\begin{table*}[ht]
\smaller\smaller
\centering
\caption{Precision Results.}
\begin{tabularx}{\textwidth}{l |
    >{\centering\arraybackslash}X |
    *{5}{>{\centering\arraybackslash}X} |
    *{4}{>{\centering\arraybackslash}X}}
\toprule
 &  & \multicolumn{5}{c}{{\HPrecise} results} & \multicolumn{4}{c}{{\LPrecise} results} \\
\cmidrule(lr){3-7} \cmidrule(lr){8-11}
\textbf{Package} & \NumMeth& \skipped & \strict &\aliasing & \precision{1}(\%) & \precision{2}(\%)
                 & \skipped & \strict & \precision{1}(\%) & \precision{2(\%)} \\
\midrule
\cv{java.util}       & 1460 & 201 & 251 & 523 & 82.8  & 99.8 & 1 & 31 & 98  & 100 \\\hline
\cv{java.io}         & 546  & 27  & 77  & 208 &85.9  & 99.8 & 0 & 5  & 99.1  & 99.4  \\\hline
\cv{java.lang}       & 278  & 9   & 39  & 88& 86  & 100& 0 & 0  & 100 & 100 \\\hline
\cv{java.nio}        & 140  & 12  & 37  & 65 & 73.6  & 89.4 & 0 & 0  & 100 & 100 \\\hline
\cv{java.text}       & 42   & 2   & 4   & 20&  90.5  & 100& 0 & 0  & 100 & --    \\\hline
\cv{java.security}   & 24   & 6   & 3   & 11 & 87.5  & 100& 0 & 1  & 95.8  & 100 \\\hline
\cv{javax.security}  & 12   & 1   & 3   & 2 & 75  & 50 & 0 & 0  & 100 & 100 \\\hline
\midrule
\textbf{Total}  & 2502 & 258 & 414 & 919 & 83.5  & 99 & 1 & 37 & 98.5  & 99.9  \\
\bottomrule
\end{tabularx}
\label{table::precision.results}
\end{table*}

   The precision of \Symmaries depends on the heap model used in the analysis. Hence, we conducted experiments with two different heap models: a less precise in modeling aliasing relationships \LPrecise, and another more precise \HPrecise, being field-sensitive. Note that our analysis is still flow-sensitive, object-sensitive and context-sensitive for both heap models.
    Table~\ref{table::precision.results} shows the precision results of the two heap models,
    \HPrecise and \LPrecise, for the analyzed Java packages. The column \NumMeth reports the total  number of methods checked in each package and its subpackages, \skipped counts the methods that the tool failed to analyze due to memory or time out, \strict indicates the number of methods with strict guards, \aliasing shows the number of methods that update aliasing relations, and \precision{1} represents the percentage of methods with precise guards. We checked the precision of guards semi-automatically. We expect that all methods that do not reach a sink method should have a \texttt{true} guard. If a method does not reach a sink but has a non-\texttt{true} guard, we classify it as strict. For methods that reach a sink, we checked the strictness of the guards manually.

\HPrecise achieves a precision of 83.5\%, while \LPrecise reaches 98.5\%. Although one might expect \HPrecise to yield more precise results, the larger number of variables it uses to model the heap
leads to more complex analysis and makes it more susceptible to over-approximation during guard
propagation. Moreover, the precision of this heap model has a greater impact on the aliasing
specification. To fully compare the two domains, one would need to examine their inferred aliasing
specifications as well. However, our goal here is not to compare these heap models directly, but
rather to illustrate the precision of the overall algorithm under different heap assumptions. Moreover, we observed that methods which reach a skipped method often seem to have strict guards, since the tool treats such methods as third-party code with pessimistic behavior. This imprecision is propagated to all methods that directly or indirectly call skipped ones. In other words, our precision results are affected by scalability limitations, which reduce overall precision. By removing the skipped methods along with all callers that depend on them, the precision improves as shown in \precision{2}.  We believe our precision results are very promising. Even if the tool fails to analyze certain methods, a user can manually provide a summary for the method or exclude it, while still being able to analyze the rest of the application.


	\newcommand\bookkeeperapp{\textsf{Bookkeeper}\xspace}%
	\newcommand\bookkeepername{\texttt{Apache BookKeeper}\xspace}%
	\newcommand\fortressapp{\textsf{Fortress}\xspace}%
	\newcommand\fortressname{\texttt{Apache Fortress}\xspace}%
	\newcommand\rocketapp{\textsf{RocketMQ}\xspace}%
	\newcommand\rocketname{\texttt{Apache RocketMQ}\xspace}%
	\newcommand\kerbyapp{\textsf{Kerby}\xspace}%
	\newcommand\kerbyname{\texttt{Apache Kerby}\xspace}%
	\newcommand\secretManagerapp{\textsf{SecretManager}\xspace}%
	\newcommand\secretManagername{\texttt{AWS Secrets Manager}\xspace}%
	\newcommand\guavaapp{\texttt{Guava}\xspace}%
	\newcommand\guavaname{\textsf{Google Guava}\xspace}%
	\newcommand\keycloackapp{\textsf{Keyloack}\xspace}%
	\newcommand\keyloackname{\texttt{Keycloack}\xspace}%
	\newcommand\eventBusname{\textsf{EventBus}\xspace}%
	\newcommand\snappyapp{\textsf{{Snappy-java}}\xspace}%
	\newcommand\snappyname{\texttt{{Snappy Java}}\xspace}%
	\newcommand\zxingapp{\textsf{{ZXing}}\xspace}%
	\newcommand\HSQLDBname{\texttt{{HSQLDB}}\xspace}%

	{
		\begin{table}
			\centering
			\smaller\smaller
			\begin{threeparttable}
				\caption[]{Open-Source Projects List}
				\label{tab::apps.list}
				\begin{tabular}
					{@{}p{2.3cm}|p{11.3cm}@{}}
					\toprule
					Name & Description \\
					\midrule
					\href{https://bookkeeper.apache.org}{\bookkeepername} &
					BookKeeper is a scalable storage service designed for real-time workloads that stores record entries in sequences called ledgers and replicates them across multiple servers for reliability. \\
					\hline
					\href{https://rocketmq.apache.org}{\rocketname} &
					A distributed messaging and streaming platform. \\
					\hline
					\href{https://directory.apache.org/kerby/}{\kerbyname} &
					A common Java-based Kerberos implementation of the Kerberos V5 protocol. \\
					\hline
					\href{https://github.com/aws/aws-secretsmanager-caching-java}{\secretManagername} & An service to manage sensitive information such as credentials, API keys, etc. \\
					\hline
					\href{https://greenrobot.org/eventbus/}{\eventBusname} &
					An open-source library for event-based publish/subscribe communication in Android and Java applications. \\
					\hline
					\href{https://www.keycloak.org/}{\keycloackapp} &
					An open-source project for identity and access management to create a user database with custom roles and groups. \\
					\hline
					\href{https://xerial.org/snappy-java/}{\snappyname} &
					A fast compressor/decompressor. \\
					\hline
					\href{https://github.com/zxing/zxing/tree/master}{\zxingapp} &
					ZXing ("zebra crossing") is an open-source, multi-format barcode image processing library implemented in Java.
					\\\hline
					\bottomrule
				\end{tabular}
			\end{threeparttable}
			\vspace*{-1ex}
		\end{table}
	}
	\vspace{-0.1in}

	\subsection*{RQ2: Scalability Evaluation}
	This part of our evaluation investigates the scalability of our approach.
	We collected several popular \emph{real-life open-source projects} from GitHub and analyzed them to infer their summary specifications. The list of selected projects, along with their descriptions, is provided in Table~\ref{tab::apps.list}, ranging from 2K to approximately 220K lines of Java bytecode.
	All experiments were conducted on a desktop computer with 32~GB of memory and a 3.5~GHz Quad-Core Intel Core i7 CPU running Ubuntu Linux.
	To obtain reasonable timing results, we enforced a timeout of 5 minutes and restricted memory usage to a maximum of 8~GB.
	Any method summarization that exceeded either of these limits was skipped and treated as a third-party method with a pessimistic specification.

	\begin{table}
		\smaller  
			\caption{Analysis Results for the Selected Open-Source Projects}
			\label{tab//web.app.experiments}
\begin{tabular}
 {p{27mm}|p{1cm} | p{0.5cm}|p{10mm}|p{8mm}| p{0.8cm} |p{0.8cm}| p{10mm} | p{10mm}}
\hline
    \multirow{2}{*}{App. Name}
    & \multicolumn{2}{ c | }{Sizes}
    & \multicolumn{4}{ c| }{Performance Statistics}
    & \multicolumn{2}{ c }{Summaries Statistics} \\\cline{2-9}
    &  bytecode instr. & \NumMeth
    & \Processed (\%) & \TotTime (s) & \ClockTime (s) & \TimeMethodAvg (ms)
    & \NumGuarded & \NumUpdates \\\hline

    \href{https://greenrobot.org/eventbus/}{\eventBusname}
    & 2.9k & 177 & 100 & 5.25 & 9.0 & 30 & 10 &   37
    \\\hline

    \href{https://xerial.org/snappy-java/}{\snappyname}
    & 5.6k & 193 & 100 & 10.06 & 14.5 & 52 & 13 &   30
    \\\hline

    \href{https://github.com/aws/aws-secretsmanager-caching-java}{\secretManagername}
    & 35.4k & 3970 & 100 & 129.6 & 457 & 33 & 0 &   188
    \\\hline

    \href{https://www.keycloak.org/}{\keycloackapp}
    & 43.2k & 4441 & 100 & 123.474 & 307 & 28 & 22 &  441
    \\\hline

    \href{https://github.com/zxing/zxing/tree/master}{\zxingapp}
    & 54.4k & 1764 & 100 & 137.4 & 236 & 78 & 1 &  245
    \\\hline

    \href{https://directory.apache.org/kerby/}{\kerbyname}
    & 62k & 4637 & 100 & 162.6 & 596 & 35 & 148 &  711
    \\\hline

    \href{https://bookkeeper.apache.org}{\bookkeepername}
    & 143.7k & 6527 & 99.9 & 204 & 1265 & 26 & 313 &   1006
    \\\hline

    \href{https://rocketmq.apache.org}{\rocketname}
    & 221.5k & 13694 & 99.99 & 1504 & 5396 & 110 & 144 &  1376
    \\\hline

    \bottomrule
\end{tabular}

		\vspace*{-1ex}
	\end{table}



	Our experiments show that \Symmaries scales effectively to handle {large real-life applications}, as shown in Table~\ref{tab//web.app.experiments}.
	These results were obtained using our least-precise heap domain.
	The time and complexity of analyzing an application depend on several factors, including the type of analysis (taint checking or implicit flow analysis), the number of methods, the complexity of the control flow graphs of methods, the number of state variables in the generated SCFGs etc.
    In six applications, all methods were successfully processed (the  column \Processed), while over 99\% of methods were processed in \rocketapp and \bookkeepername. The total number of analyzed methods in our experiments was $\sim$35,000, and \Symmaries successfully computed a summary for \textbf{99.99}\% of them.

	{\TotTime in Table~\ref{tab//web.app.experiments} shows the total effective time taken by \Symmaries for the analysis of an application,  \ClockTime is the associated wall-clock time, and \TimeMethodAvg shows the average time for analyzing a method.
		Although it took longer to construct summaries for some methods, the majority of methods could be completed in less than 2ms. The time for the analysis of a method ranged from less than 1ms to about 9s, with an average of 26ms.
		The effective time for the analysis of applications ranges from 5 seconds to 25 minutes, depending on the size of the application and the number of analyzed methods. However, the wall clock time, which is less accurate, spans from 9 seconds to approximately 90 minutes.
		Our results prove that \Symmaries can scale up to analyze real-life large applications.
	}

	Table~\ref{tab//web.app.experiments} also presents results on the inferred summaries.
    Column~\NumGuarded shows the number of methods whose guard specification is not a tautology, i.e., methods that may cause information leakage depending on the calling context.
    Furthermore, \Symmaries identified several methods with false guards. These cases were mainly observed in methods with complex control-flow structures, where over-approximation led to  restrictive outcomes.
    In addition, column~\NumUpdates reports the number of methods that include an information flow between their arguments and return value.


	\vspace{-0.1in}
	\subsection*{RQ3: Usability}
	Although we did not conduct experimental user studies, several aspects of \Symmaries demonstrate its usability in practice. First, the tool integrates with \soot, a widely used analysis framework for Java bytecode, and requires only minimal configuration from the user: the application or library to analyse (as a \texttt{.jar} file), the list of sinks and sources (as an \texttt{.xml} file), and optional security assumptions for excluded methods (\textsf{.secstubs} files) when method definitions are missing (e.g., in native methods) or portions of the program are too difficult to analyse. Thanks to its summary-based analysis and the concept of symbolic sources, \Symmaries does not require extensive code annotations or the identification of entry points, which are often required for static code analysis. This approach has enabled us to analyze real-world applications with minimal manual effort.

    Second, the tool output is expressed in a concise, human-readable summary language. This design allows users to interpret and manually validate the results in their own context, and facilitates informed decision-making when reusing third-party components or assessing the security of their own code. examples of the output are provided in the paper, for instance in \figurename~\ref{fig::running.example.code}.

	Third, \Symmaries supports a hybrid workflow that balances automation with user control. While the tool is designed to automatically infer security summaries for methods, users may override them through \textsf{.secstubs} files to encode more precise specification. Thanks to this flexibility, even if some large or complex methods cannot be processed, users can exclude them and provide alternative specifications (e.g., treating them as third-party library methods or applying a simpler, flow-insensitive analysis left for future work), while still analyzing the remainder of the application.

	Finally, our experiments on real-world Java API libraries and applications provide indirect evidence of usability. \Symmaries successfully handles projects with up to hundreds of thousands of lines of code within reasonable time and memory limits. The ability to process such large-scale systems indicates that the tool is not only theoretically sound but also practically applicable. Together, these features demonstrate that \Symmaries is usable in practice.


	\subsection*{RQ3: Comparison with State-of-the-Art}
	Several tools have been proposed for specification inference from code, including approaches that infer aliasing specifications for libraries~\cite{eberhardt2019unsupervised,bastani2018active} or taint specifications for methods~\cite{tileria2024docflow,clapp2014mining,icse/StaicuTSMP20}, similar to \Symmaries (See Section~\ref{sec::related.works}). However, many of these tools either target languages other than Java or produce specifications of a different kind, making direct comparison with our summaries challenging. Even among the few tools that support Java, specification formats vary. For example, \cite{bastani2018active} expresses results as path specifications, \cite{eberhardt2019unsupervised} uses dedicated predicates, and \cite{livshits2009merlin} synthesizes sources, sinks, and sanitizers. In contrast, \Symmaries infers security guards as logical expressions, together with aliasing specifications and information flows expressed as sets of assignments, which cannot be directly mapped to these formats.
    The closest related approach is StubDroid~\cite{arzt2016stubdroid}, which extracts data flows between sources and sinks in Android applications. StubDroid automatically generates summaries of Android framework methods for taint analysis to improve the performance of FlowDroid~\cite{Artz2014FlowDroid}. However, it is unsound, does not support implicit flows, and is limited to explicit data-flow extraction. In contrast, \Symmaries captures both implicit flows and aliasing information.
    Therefore, a meaningful comparison with existing approaches would require them to produce comparable specifications to ours for Java methods, which none currently do. Therefore, detailed comparison of output against prior work is not possible.

	\subsection*{Threats to the Validity of Results}
    A threat to the validity of our precision results is that our experiments evaluated only the precision of guards.
    A more comprehensive assessment should also incorporate aliasing relationships in addition to the information flows to provide a fuller picture of precision.
    Moreover, we assumed that the security stubs provided by the user are correct. Since incorrect specifications can directly lead to incorrect tool outputs, this assumption is a common threat in formal analysis tools that depend on user-provided annotations or inputs.

	We ran our scalability experiments with the less precise heap model \LPrecise, which offers better performance compared to more precise heap domains, theoretically at the cost of a lower precision. However, our initial observations in applying the tool on real-life applications show that more precise heap models produce more restrictive guards for methods with intricate control flow structures, resulting in a higher rate of false positives.
	A proper investigation of this claim requires significant efforts to (manually) check the guards of thousands of methods, which we leave as future work.
	Hence, we believe \LPrecise might offer better scalability and precision in practice overall, and can be a reasonable choice to run \Symmaries with.


\vspace{-0.1in}
\section{Related Works}\label{sec::related.works}
\paragraph{Summary-based Analysis}
Summary-based approaches for program analysis offer a solution to mainly address scalability issues, where the goal is to perform modular analysis, incremental analysis~\cite{DBLP:journals/tse/HeYC22}, or even loop summarization.
A summary can capture various types of information depending on the purpose of the analysis, e.g., points-to relations~\cite{sas/NystromKH04}, dataflows~\cite{ccs/WeiLOCZ18}, locking behavior~\cite{usenix/BaiL022}.
UBITect~\cite{sigsoft/ZhaiHZWSQLKY20} uses a summary-based approach to detect UBI bugs in the Linux kernel, where method summaries are twofold like ours: the requirements on input under which the method call is safe and the updates to the outputs with respect to the inputs.
Bai et al. propose a summary-based dataflow analysis to detect locks in the Linux kernel~\cite{usenix/BaiL022}.
The authors in~\cite{kbse/LeeLR20} employ a summary-based approach to detect bugs in multilingual programs where semantic summaries from programs written in the guest language are extracted and used for the analysis of the host language. The focus of these approaches is on detecting specific bugs rather than inferring security specification, and they are not sound.

The few existing summary-based approaches for security analysis focus on taint analysis. Wei et al.~\cite{ccs/WeiLOCZ18} propose a summary-based taint analysis approach to identify the inter-language dataflow in Android applications containing native code. Stievenart et al.~\cite{scam/StievenartR20} propose a summary-based taint analysis for WebAssembly programs, where a function summary illustrates the information flows from parameters and global variables to the return value and the global variables after execution. Zhang et al.~\cite{ccs/ZhangCHLZZQ21} employ Dr. Checker~\cite{uss/MachirySCSKV17}, a framework for detecting bugs in Linux drivers, to construct taint flows for entry functions to detect higher-order taints. While the above approaches focus on taint detection, Staicu et al.~\cite{icse/StaicuTSMP20} focus on a similar goal to ours and propose a summary-based approach to extract taint specifications for JavaScript libraries. This work uses dynamic taint analysis and testing to construct summaries, and their summaries are different from ours in the sense that they do not capture how the procedure affects the heap or when this method can be used. 
Merlin~\cite{livshits2009merlin} extracts information flow specifications using probabilistic inference and inter-procedural data flow analysis, identifying potential sources, sinks, and sanitizers within methods. However, it does not generate specifications that describe method behavior.
In contrast to the above approaches, our method is sound, and our summaries are more expressive: they infer guards as logical expressions, along with information-flow and aliasing specifications represented as assignments. Existing approaches may achieve better scalability, but at the cost of sacrificing soundness.

The summary-based approaches have been used to address scalability of formal verification,
\eg  in software
model-checking ~\cite{Sery:2011:IFS:2425986.2426005},
the
verification of numerical properties in programs via \emph{abstract
	interpretation}~\cite{Cousot:1977:AIU:512950.512973},
and construction of 
procedure summaries via relational abstract interpretation~\cite{
	BoutonnetHalbwachs2019DisjRelAbstrInterp4Summaries}.
  These approaches construct
summaries that are \emph{relations} that link the \emph{values} of
inputs and outputs.
\emph{In contrast, the summaries that we compute are contracts, that we argue are easier to interpret in an IFC analysis context.}
We also provide a guard to show when to use this method securely, in addition to providing the effects.
The shape analysis of ~\citet{Gulavani:2009:BSA:1615441.1615457} (which
discovers structural properties about manipulated data-structures)
also makes use of logic-based summaries, yet they rely on some
fragments and/or extensions of Separation
Logic~\citep{Reynolds:2002:SLL:645683.664578}.
\emph{To the best of our knowledge, there has been no research on using summary-based formal analysis to address security concerns, specifically in the information flow analysis context.}

\paragraph{Library Specification Inference}
Several approaches have been proposed for inferring API specifications. USpec~\cite{eberhardt2019unsupervised} uses unsupervised learning to infer aliasing relations in APIs, which are then used to enhance points-to analysis. Atlas~\cite{bastani2018active} automatically infers points-to specifications using dynamic execution. DAInfer~\cite{wang2024dainfer} derives aliasing relations from API documentation using natural language processing and optimization techniques. Spectre~\cite{kan2025spectre} generates aliasing specifications for library APIs using fuzzing techniques.
The approach in~\cite{eberhardt2019unsupervised} learns specifications by analyzing API usage patterns without requiring manual annotations, access to source code, or the ability to execute the APIs. Ramanathan et al.~\cite{ramanathan2007static} infer preconditions of API calls using predicate mining and inter-procedural data flow analysis.
While some of these methods are related to our work in that they infer aliasing relations, they do not extract security summaries and lack soundness, unlike our approach. Clapp et al.~\cite{clapp2014mining} mine explicit information flows of libraries from concrete executions, and Tileria et al.~\cite{tileria2024docflow} extract taint specifications from API documentation. However, these approaches are unsound, fail to capture implicit flows, and do not consider aliasing in the extracted specifications, unlike our work.

\nbrem[yep]{Few works have suggested using summaries in the context of C(-like) program analysis.
Several approaches ~\citep{Choi:1993:EFI:158511.158639,
Yu:2010:LLM:1772954.1772985} construct summaries as (partial) transfer functions that summarise the side effects of each procedure, 
whereas Saturn~\citep{Xie:2005:SED:1040305.1040334
} achieves relatively scalable and precise constraint-based analyses by using custom graphs to construct summaries.
\textsc{Calysto}~\citep{Babic:2008:CSP:1368088.1368118} also builds summaries for performing static checking (\aka linting) of C programs.}


\paragraph{IFC Verification}
Static analysis approaches for ensuring noninterference have been extensively studied in the research community. The vast majority of proposed information flow control solutions focus on type systems~\cite{pottier2003information,sabelfeld2003language,Barthe:2007:CLN:1762174.1762189}, and various tools targeting realistic programming languages have been developed to verify such properties—e.g., JFlow/JIF~\cite{jif-myers-popl99} and \KeY~\cite{KeY}. Other approaches rely on more general static analysis techniques, such as interprocedural dataflow analysis~\cite{Reps:1995:PID:199448.199462} and program slicing~\cite{Kam:1977:MDF:2696874.2696926}. Notable frameworks in this category include JOANA~\cite{Hammer:2009:FCO:1667545.1667547}, DroidSafe~\cite{GordonKPGNR15DroidSafe}, and \textsc{FlowDroid}~\cite{Artz2014FlowDroid}.
There are also practical IFC solutions tailored for web applications~\cite{Moller:2014:ADC:2668018.2531921,hedin2014jsflow} and Android apps~\cite{li2017static}. However, most of these either do not handle implicit flows or lack soundness. When soundness is enforced, it often comes at the cost of practicality—for example, by rejecting any program that contains a method call in a high-security context~\cite{Lortz2014Cassandra}, or by requiring extensive manual annotations, as in \cite{KeY,jif-myers-popl99}.
\emph{While the goal of these approaches is vulnerability detection through static analysis, our focus lies in inferring abstract representations of security-relevant behavior.} Among these, only JFlow/JIF~\cite{jif-myers-popl99} infers security labels for variables. However, our approach introduces more expressive guards that can include conditions on heap structures too. Further, in contrast to \Symmaries, JIF does not compute method updates and relies heavily on manual annotations.

\green{
\paragraph{Open-Source Software Security}
Static security analysis and vulnerability detection in open-source code aim to determine whether a program adheres to specific security requirements. They categorize code as either secure or insecure based on a provided security policy, which might differ from the consumer's policies. However, these methods do not analyse the security behaviour of code for arbitrary security policies or express its security implications if reused.
{Software reuse strategies~\cite{cacm/RavichandranR03} can be categorized into either a white-box approach, where developers have access to the component content and can modify it, or a black-box approach, wherein the code is utilized 'as is,' and developers does not access to the code.
Software development relies on package managers and code repositories, enabling developers to share code and reuse third-party code.}
Approaches to identify vulnerabilities in open-source code can be categorized into two types: version-control-based approaches that check whether a specific version of a code clone contains reported vulnerabilities~\cite{ccs/DuanBXKL17,icse/WooPKLO21}, or code-based methods where malicious or vulnerable code is detected by being matched to known vulnerable code~\cite{uss/Xiao0YXYLL0HZS20,sp/KimWLO17}.
V1SCAN~\cite{uss/WooCLO23} is a hybrid method that combines version-based and code-based approaches to detect vulnerabilities in C/C++ programs using classification techniques.
There are several methods to identify modified code; for example, ~\cite{icse/WooPKLO21} proposes the approach \textsc{Centris} to identify modified OSS reuse, and ~\cite{icse/ReidJM22} developed the tool VDiOS to find vulnerabilities that have already been patched in the original projects.
Further, existing works to ensure supply chain integrity either focus on detecting and isolating malicious packages~\cite{ZimmermannSTP19,StaicuPL18}, or on designing trusted package managers~\cite{TorresAriasAKC19,KuppusamyDC17}.
Software composition analysis (SCA) is a process to identify the open source code in a  program.
 \emph{Our approach is complementary to the above approaches, in the sense that
 these approaches focus on detecting how vulnerable code has been propagated, while we take a different approach to provide abstractions of the program security behaviour to help the developer  in choosing a secure open-source code to use.}
{%
Further, existing approaches for secure open-source code development often leverage package managers, enabling developers to share and reuse third-party code securely. Existing efforts primarily concentrate on either detecting and isolating malicious packages~\cite{ZimmermannSTP19,StaicuPL18} or designing trusted package managers to ensure supply chain integrity~\cite{TorresAriasAKC19,KuppusamyDC17}. However, these approaches often lack formal foundations necessary to provide sound security guarantees, and they do not explain the security behavior of code in contrast to ours.
}
}

\begin{leaveout}
Static security analysis and vulnerability detection in open-source code aim to determine whether a program adheres to specific security requirements. They categorize code as either secure or insecure based on a provided security policy, which might differ from the consumer's policies. However, these methods do not analyse the security behaviour of code for arbitrary security policies or express its security implications if reused. This presents a challenge for developers in assessing its suitability for use.
	Therefore, for the secure reuse of (open-source) code, it is essential to adopt approaches with solid foundations to identify the security implications of code correctly and ensure compliance with security requirements. These approaches should describe the security-related behaviour of the code in an abstract way, enabling developers to make informed decisions about the code's suitability for secure reuse and understanding its security implications if utilized.

\todo{See the survey on supply chain attacks.}


\todo{Add FlowDist}
\end{leaveout}

\section{Conclusions}
\label{sec::conclusions}
We have presented a scalable approach for constructing sound security summaries of Java bytecode methods. Each summary consists of a guard, specifying the conditions under which a method call is secure, and an update, capturing the resulting changes to aliasing and information flow. Our tool-supported technique leverages co-reachability analysis to compute these summaries. An evaluation of \Symmaries demonstrated its effectiveness, precision, and scalability, showing that it can successfully analyze large, real-world applications.

    As ongoing work, we are predicting methods that are difficult to analyze and excluding them from automatic analysis while providing a manual summary to improve the precision and scalability of our approach (i.e., to save time by avoiding analysis of methods that are less likely to complete successfully).
	\Symmaries currently employs a naive name-based resolution procedure
	to construct an over-approximation of the call-graph and solve
	Eq.~(\ref{eq:full-fixpoint}).
	In that regard, several avenues exist to improve the precision of our
	tool for real-life applications by employing alternative, sound yet
	more precise, call-graph construction
	algorithms~\cite{Reif2019JudgeCGConstruction}.
	Furthermore, bottom-up program analyzes are generally
	well-suited to parallel implementations, and we plan to improve the
	run-time performance of \Symmaries in this way.
	Another work consists in extending our approach to generate runtime monitors to control
	information flow.
	This consists in augmenting our input language with checkpoint
	statements, for which we synthesize security conditions that are
	checked at runtime~to~ensure~security.

%
\ifacmartloaded
\bibliographystyle{ACM-Reference-Format}
\bibliography{bibs/ifc,bibs/sa,bibs/mc,bibs/sr,bibs/des,bibs/rv,bibs/other}
\fi
\ifacmartloaded
\appendix
\fi
\green{

    \section{Soundness Proof of Summary Computation}\label{sec::summary.computation.proof}

	Let \(\Psi_0 ~\Psi_1 \ldots \Psi_n \) be a sequence of execution of the procedure $\Summary{.}{.}$ to compute the IF-summaries of methods $\Methods$  where \(\Psi_i = (m_i, \MethSums_i)\), \({m_i \in \Methods}\) is the method for which the IF-summaries are computed in the step $i$ and \(\MethSums_i\) is the set of all IF-summaries computed so far for all the methods $m_i$, \({0 \leq i \leq n}\).
	To prove monotonicity of $\Summary{.}{.}$ \wrt
	\( \langle\Summaries m, ⊑_{\Summaries m} \rangle \), we should prove that \(\MethSums_i(m)  ⊑_{\Summaries m} \MethSums_j ({m})\) holds for any method $m$ where ${0 \leq i<j \leq n}$.
	To prove this, we induct on $n$. The base case is obvious. Let this property hold for any pairs up to the index $k$, \ie  $\{0<i,j \leq k\}$. To prove the property for ${n=k+1}$, two cases can happen.
	If there exists no \(\Psi_i= (m_{k+1}, \MethSums_i), i <k+1\), then the property obviously holds.
	Otherwise, let $m_{k+1}$ be processed in a step $i < k+1$.
	According to the induction step,  \(\MethSums_i ({m'}) ⊑_{\Summaries m} \MethSums_j ({m'}) \) holds for any method $m'$ called by $m$ where $0 \leq i<j \leq k$. Let  \(\MethSums_{j} ({m})  \not\sqsubseteq_{\Summaries m} \MethSums_{k+1} (m)\), \ie the synthesized guard {\SumGuard {m,k+1}} for $m$ in the step $k+1$ is more permissive than {\SumGuard {m,j}} computed in $j$ and
	\(\sState \not\models \SumGuard {m,j} \) and  \(\sState \models \SumGuard {m,k+1} \) for some state $\sState$.
	The invariant in rule \rname{Call} is more restrictive in the step $k+1$ according to the inductive step hypothesis, \ie  $ \sState \models \SumGuard {m',j} \implies \sState \models \SumGuard {m',k+1}$ for all states \sState.
	If \({\MethSums_j(m') = \MethSums_{k+1}(m')}\) for all $m' \in \Methods $ called by $m$, this will obviously
	imply \({\MethSums_j(m) = \MethSums_{k+1}(m)}\) which contradicts our assumption.
	If there is a callee method $m'$ where \(\MethSums_j(m') \sqsubset_{\Summaries m} \MethSums_{k+1}(m')\), the synthesized guard \(\SumGuard {m,k+1} \) should avoid some states leading to a state in which $m'$ is called, \ie there is a state $\sState$ from which executing $m$ will lead to the new states $\sState'$ of $m$ restricted by the invariant ${\SumGuard {m',k+1}}$ while \(\sState' \models \SumGuard {m',j} \). This means that \(\sState \not\models \SumGuard {m,k+1} \) should hold to avoid the bad state $\sState'$ which contradicts our assumption and proves  \(\MethSums_j ({m}) ⊑_{\Summaries m} \MethSums_{k+1} ({m})\).

	\section{Intra-procedural Security Semantics}\label{sec::intra.sematics}
\figurename~\ref{fig:security-semantics-for-statements} shows the simplified security semantics of assignment and branch statements. $f_p$ and $f_r$ represent a primitive field and a reference field, respectively. The function $\sf updHeap (s)$ updates the heap structure and the security level of objects affected by the execution of $s$.
The variable $\omega$ abstracts away the actual branch condition. The function \TBrch{\region{}\stm} is used for upgrade analyses to handle implicit flows. For a detailed description of the security semantics of intra-procedural statements, see \cite{BerthierKhakpour23Heap}.

\begin{figure}[!h]
  \centering
  \begin{freeruleset}
    \myfreerule{\GotoRule}{%
    }{%
      \semloc{\s{\cgoto~l};\stms}
      \trans{\tt, ∅}%
      \newloc{\Target{} l}%
    }
    \qquad
  \end{freeruleset}
  \\[\myrulespace]
  \begin{ruleset}
    \myrule{\AssignRule}{%
      \begin{array}{@{}l@{}}
	T_\stm =
	\left\{%
	\begin{array}{@{}l@{\,}c@{\,}l@{\ \text{if~}\stm = \,}l@{}}
	  ［\plvl v \assignv \plvl e］ &&& \s{v = e}\\
	  ［\plvl v \assignv \plvl r ⊔ \flvl[\hvar]r{f_p}］ &&& \s{v = r.f_p} \\
	  ［\plvl r \assignv \plvl s ⊔ \flvl[\hvar]s{f_r}］ &\SQMergei u& \LoadRef r s {f_r}{\hvar}& \s{r = s.f_r} \\
	  ［\plvl r \assignv \plvl s］ &\SQMergei u& \CopyRef r s {\hvar} & \s{r = s}\\
	  ［\plvl r \assignv ⊥］& \SQMergei u& \NullRef r {\hvar} & \s{r = \Null}\\
	  ［\plvl r \assignv ⊥］& \SQMergei u& \NewRef r c {\pc}{\hvar} & \s{r = \cnew~c}\\
	  \multicolumn 3 {r@{\ \text{if~}\stm = \,}} {%
	  \StorePrim  r {f_p} e {{\nomlvl{\plvl{e}}}} {\hvar}} & \s{r.f_p = e} \\
	  \multicolumn 3 {r@{\ \text{if~}\stm =  \,}} {%
	  \StoreRef r {f_r} s {\nomlvl{\plvl s ⊔ \hlvl[\hvar] s }}{\hvar}} & \s{r.f_r = s}
	\end{array}\right.
      \end{array}%
    }{%
      \semloc{\stm;\stms}%
      \trans{\tt, T_\stm}%
      \newloc{\stms}%
    }\\[\myrulespace]%
    \myrule{\BranchRule}{%
      \quad
      ℓ = \s{\cif~(e)~\cgoto~l};\stms
    }{%
      \begin{array}{@{}l@{~}l@{}@{~}l@{}}%
        ℓ \trans{\phantom{¬}ω ∧ ¬\uVar ∧ \plvl e \,\not⊑\, \pc, \TBrch{\region{}\stm}} \newloc{\Target{} l} \quad\quad &
	ℓ\trans{\phantom{¬}ω ∧ ¬\uVar ∧ \plvl e \,⊑\, \pc ∨ \phantom{¬}ω ∧ \uVar, ∅} \newloc{\Target{} l} \\
        ℓ\trans{        ¬ ω ∧ ¬\uVar ∧ \plvl e \,\not⊑\, \pc, \TBrch{\region{}\stm}} \newloc{\stms} \quad\quad  &
        ℓ\trans{        ¬ ω ∧ ¬\uVar ∧ \plvl e \,⊑\, \pc ∨           ¬ ω ∧ \uVar, ∅} \newloc{\stms}
      \end{array}%
    }
  \end{ruleset}
  \par
  \raggedright%
  \vspace{\myrulespace}
  %
  \caption{Security semantics of intra-procedural statements.}
  \label{fig:security-semantics-for-statements}
\end{figure}

\section{Full Semantics}\label{sec::full.sematics}
\begin{figure*}
  \begin{minipage}{\linewidth}
  \begin{ruleset}
    \myrule{\mStmRule}{%
      \begin{array}{@{}c@{}}
        \semloc{\stm;\stms\;} %
        \trans{\phi, T}%
        \semloc{\;\stms'}
         \quad
        \stm ∈ \{  \begin{array}{@{}c@{}}
        	\s{\creturn}, \s{\creturn\,x},  
        	\s{\coutput l (\_)}
        \end{array} 
    		\}
        \quad
        (\pval', \mHeapVal') = \evalStm(\stm,\pval, \mHeapVal)
      \end{array}
    }{%
      \mframe[]{\stm;\stms}\pval \mHeapVal:: \mStack%
      \ttrans{\neg \mode ∧ \phi} T%
      \mframe[]{\stms'}{\pval'}{ \mHeapVal'}:: \mStack%
    }\\[\myrulespace]
    \myrule{\mCallRule}{
      \begin{array}{@{}c@{}}
        \stm = \s{[\_ =]\,r.m'(\lits)} \quad
         ℓ_m = \semloc{\stm;\stms}
     \quad
        {ℓ_m \trans{\tt,T}%
        \stms'} \\
        \methFrame = \mframe[]{ℓ_m}{\pval} \mHeapVal\quad
        \methFrame[m'] = {\NewFrame[\stm, \pval, \mHeapVal]}
      \end{array}
    }{
      \begin{array}{@{}c@{}}%
        \methFrame :: \mStack
        \ttrans{\neg \mode 
        }{\emptyset}
        \methFrame[m'] :: \methFrame :: \mStack
      \end{array}
    }\\[\myrulespace]
    \myrule{\mExitRule}{%
      \begin{array}{@{}c@{}}
      	 \stm = \s{[\_ =]\,r.m'(\lits)} \quad
        ℓ_m = \semloc{\stm;\stms} \quad  ( \pval_m',\mHeapVal_m') = \evalStms{(\mathsf{footUpd},\pval_m, \mHeapVal_m)} \quad
        {\stm };\stms \trans{\tt, T}\newloc{\stms}
      \end{array}
    }{%
      \begin{array}{@{}c@{}}%
        \mframe[]{\surd}{\pval_{m'}}{\mHeapVal_{m'}} :: \mframe[]{ℓ_m}{\pval_m}{\mHeapVal_m} :: \mStack
        \ttrans{
        \tt}{T}
        \mframe[]{\stms}{\pval_m'}{\mHeapVal_m'} :: \mStack%
      \end{array}
    }
  \end{ruleset}
  \par
  \smallskip%
  \raggedright%
  where
  \smaller
  \[
  \mathsf{footUpd}= \{r = \this \}\cup \{{w^{-1}} (p) = p~|~ p \in {\mathit{Args}_m \cap \MethRefs m}\}\cup      \begin{cases}
  	{x=\pi} & 	\text{if~} \stm = \s{x = r.m(w)}\\
   \emptyset & \text{if~} \stm = \s{r.m(w)}
  \end{cases}
  \]
  \begin{align*}
     \evalStm{(a, \pval, \mHeapVal)} ≝ \,
    \begin{cases}
      (\pval[π⟼\pval(v)], \mHeapVal) &\mbox{if~} a = \{\creturn~v\}\\
      (\pval, { \mHeapVal[π⟼ r]} )&\mbox{if~} a=\{\creturn~r\}\\
      (\pval, \mHeapVal )& \mbox{if~} a \in 
      \{\coutput\_(\_),\creturn\} \\
    \end{cases}
  \end{align*}
  and\vspace*{-.2\baselineskip}%
  \begin{align*}
	\evalStms{(A, \pval, \mHeapVal)} ≝ \,
	\begin{cases}
		(\pval, \mHeapVal) &\mbox{if~} A = \emptyset\\
		\evalStms{(A \backslash \{a\}, \pval', \mHeapVal')} &a \in A \mbox { and }	(\pval', \mHeapVal') = \evalStm{(a, \pval, \mHeapVal)}
	\end{cases}
\end{align*}
\end{minipage}
\caption{Full semantics of inter-procedural statements.}
  \label{fig::full.semantics}
\end{figure*}

%

The execution semantics of intra-procedural statements is defined according to the semantics introduced in~\cite{BerthierKhakpour23Heap}.
We limit ourselves to the definition of the inter-procedural-related  statements in this paper.
In the full semantics, a \emph{configuration} is defined as a non-empty stack  \( \methFrame[m_0] :: \methFrame[m_1]:: \ldots :: η\) of \emph{frames} where each frame represents the local state for a method invocation.
We define a \emph{frame} for a non-terminated call to a method \(m\) as a tuple \(\methFrame[m] ≝ \mframe[]{ℓ_m}{\pval_m}{\mHeapVal_m}\) where $ℓ_m$ denotes the semantic location in its basic semantics,  \(\pval_m
\) is a {\emph{valuation}} for all primitive variables used in $m$'s body, 
and $\mHeapVal_m$ is the concrete heap value.
We give the full semantics in \figurename~\ref{fig::full.semantics}, where \(\trans{\phi,T}\) and \(\ttrans \phi T\) are transitions of the basic semantics and of the full semantics, respectively.
Note that the updates by the basic semantics are captured in  transition updates while the updates by the execution semantics are explicitly shown in target locations.
The transitions of the full semantics
use
the variables of the SCFG associated with the frame at the top of the stack in the source configuration.
We refer to an SCFG that specifies the full semantics of a method $m$ by  \(\G_{m}\) built using the above semantics rules.
The rule \mStmRule handles 
return and sink statement, by extending the updates by the basic semantics with updates to the method's memory \pval and its concrete heap \mHeapVal (we use a dedicated return variable π to hold returned values until the sink location is reached).
When a method is called, a new context to execute the callee $m'$ is created with the help of the \NewFrame function, and pushed to the stack (\rname{m-CALL}); this may only happen when the method at the top of the stack runs in nominal mode (\ie \(\mode = \ff\)).
This frame basically consists of a tuple \(\methFrame[m'] = \mframe[]{\stms_{m'}} {\pval_{m'}} {\mHeapVal_{m'}}\) where \MethBody{m'} is the body of the target method \(m'\) which is determined based on the class of the object reference by \(r\), \ie according to \emph{dynamic dispatch}, \(\pval_{m'}\) is a valuation for \(m'\)'s local variables and arguments that is appropriately initialized with the effective arguments \lits, and $\mHeapVal_{m'}$ is the initialized concrete heap.
Upon creation of \methFrame[m'], an SCFG \(\G_{m'}\) is instantiated 
 with initial location \semloc{\stms_{m'}}.
Its state variables are initialized according to the initialization equation of basic semantics,
with the addition of a constraint that
	\begin{enumerate*}[(i)]
		\item assigns every symbolic state variable from \MethFormalArgs{m'} with the current valuation of corresponding state variables in \(\G_m\),
		\item initializes the symbolic heap abstraction for \(m'\) \wrt the effective arguments \lits, and
		\item copies the security level of the calling context \(\pc\) to that of the callee.
	\end{enumerate*}
	The semantic location for frame \methFrame only changes when the sink location of the callee is reached
	(rule \rname{m-Exit}), where the top frame is popped from the stack and the control returns back to the caller method by updating
	the base reference \(r\),  the reference arguments and the returned value, if any according to  the function $\mathsf{methUpd}$.
Observe that $w^{-1}$ is the inverse of the  mapping function $w$ which maps that variables in the callee to the caller variables.
{It also upgrades the typing environment of the caller according to the callee's summary update (See the rule \rname{Call}).}

 \renewcommand{\Symmaries}{Anonymized}

\section{Proof of Noninterference}
\label{sec::noninterference-proof}

\begin{definition}[Compatible States]\label{def::compatible.states}
	We say two states \( \qstate_1\) and \( \qstate_2\) are {compatible},
	denoted by \(\qstate_1 \approx \qstate_2 \), iff
	\begin{enumerate}[(1)]
		\item\label{compatible.states.item:same.locations} \(\qstate_1(ℓ)=\qstate_2(ℓ)\),		  		
		\item\label{compatible.states.item:same.memory.typing.environment} \(\qstate_1({\VarsTypes})=\qstate_2(\VarsTypes)\),
		\item\label{compatible.states.item:same.heap.typing.environment} \(\qstate_1(\HeapTypes)=\qstate_2(\HeapTypes)\),
		\item\label{compatible.states.item:low.equiv.memory} \(\qstate_1(\pval)=_{\low}\qstate_2(\pval)\),
		\item\label{compatible.states.item:low.quiv.heap}  \({\qstate_1(\mHeapVal) =_{\low} \qstate_2(\mHeapVal)}\), and
			\item\label{compatible.states.item:related.stacks}  \( \qstate_1( \mStack) =  \qstate_2( \mStack) \)
	\end{enumerate}
\end{definition}

\begin{proof}
  To prove Theorem~1, we need to find a witnessing bisimulation relation $\InRelation$ that witnesses $\qstate_1 \lowbisim \qstate_2$. We define $\InRelation$ as the following and prove that it is a bisimulation relation:
  \[
    \InRelation=\left\{ \langle \qstate,\qstate' \rangle~|~
      \begin{array}{@{}ll@{}}
        \qstate \approx \qstate' \wedge \beta(\qstate,\qstate')
      \end{array}%
    \right\}
  \]
  where \begin{equation}		\label{compatible.states.item:same.pc.mode}
    \beta(\qstate,\qstate') =
    \begin{array}{l}
    \qstate(\mode)=\qstate'(\mode)=\ff ~\wedge\\
\qstate(\pc) =\qstate'(\pc) = \low .
    \end{array}
 \end{equation}

 { The initial states $\qstate_0$ and $\qstate'_0$ are obviously in the relation $\InRelation$ according to Definition~1 (Low-Bisimulation), {$\qstate_0\models{ x_{0\mbox-\mathit{all}}}$ and $\qstate'_0\models{ x_{0\mbox-\mathit{all}}}$}.}
  Let $\langle \qstate, \qstate' \rangle \in \InRelation$. According to the definition of low-bisimulation, we should prove
  %
  that,  if $\qstate\xrightarrow{o}_* t$,
  then either (a) there exists $t'$ such that $\qstate' \xrightarrow{o}_* t'$
  and $t \InRelation t'$,
  or (b) $\qstate' \xrightarrow{\low}_* $,
  and vice versa.
  If no observation is made from $\qstate'$, then the case (b) obviously holds.
  If an observation is made after $\qstate'$, then the conclusion is followed from Lemma~\ref{lemm::single.observation} which states that $t$ and  $t'$ are in the relation \(\InRelation\) too.
\end{proof}


\begin{lemma}[Single Observation]\label{lemm::single.observation}
  Let   \(\qstate_1 \InRelation \qstate_2 \).
  If $\qstate_1	\xrightarrow{o}_* \qstate'_1$
  and	$\qstate_2	\xrightarrow{o}_* \qstate'_2 $, then \(\qstate'_1 \InRelation \qstate'_2 \).
\end{lemma}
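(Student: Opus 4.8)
The plan is to prove the lemma by strong induction on the length of the execution $\qstate_1 \xrightarrow{o}_* \qstate'_1$, replaying it against the second run $\qstate_2 \xrightarrow{o}_* \qstate'_2$ and maintaining as a loop invariant that the two runs stay inside $\InRelation$ --- i.e.\ remain compatible ($\approx$), in nominal mode ($\mode=\ff$), and in a low context ($\pc=\low$) --- up to and including the final observation. At each step I would case-split on the semantic rule (\mStmRule, \mCallRule, \mExitRule and the underlying intra-procedural rules \AssignRule, \GotoRule, \BranchRule) that fires at the shared location $\qstate_1(ℓ)=\qstate_2(ℓ)$. Two facts guide every case: first, the security typing environments $\VarsTypes$ and $\HeapTypes$ are updated by the basic semantics as a function of the current environment and location only --- never of the concrete values --- so the typing clauses of compatibility are preserved in lock-step; second, a datum is low-observable exactly when its level is $\low$, so to preserve the low-equivalences $\qstate_1(\pval)=_{\low}\qstate_2(\pval)$ and $\qstate_1(\mHeapVal)=_{\low}\qstate_2(\mHeapVal)$ it suffices to check that every write landing at level $\low$ is a function of low inputs. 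I would also use throughout that, by the theorem's hypothesis $\qstate_i\models\SumGuard m$ together with the soundness of the co-reachability construction behind $\SumGuard m$, every state reached along the two runs satisfies the invariants of \figurename~\ref{fig:method-preambles-n-return} and~\ref{fig:method-calls}, in particular $\phi_{\rname{Sink}}$ at every sink and $\SumGuard{m'}$ at every call site; and that $\mode=\ff$ is preserved automatically, since every full-semantics rule fires only when $\neg\mode$.

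The low-context lock-step cases are then routine. For the deterministic intra-procedural statements, compatibility of the pre-states (equal location and typing environments, low-equivalent $\pval$ and $\mHeapVal$) makes the basic-semantics updates agree on the low-observable part, so the successor states are again compatible with $\pc=\low$ and $\mode=\ff$. For a conditional whose guard has level $\sqsubseteq\pc=\low$, low-equivalence forces both runs to evaluate it identically and hence to take the same edge via \BranchRule, so lock-step and membership in $\InRelation$ persist.

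The delicate cases, and the ones I expect to be the main obstacle, are high branches and the handling of method calls. When a conditional guard has a high level ($\plvl e\not\sqsubseteq\pc$) the two runs may take different edges; \BranchRule then raises the context to $\high$ and fires the transformer $\TBrch{\region{}{\stm}}$, which weakly upgrades to $\high$ the level of every variable and heap relation that may be written inside the control-dependence region $\region{}{\stm}$. The key sub-lemma to establish is a \emph{confinement} property: an execution running with $\pc=\high$ (a)~cannot fire \OutputRule, since $\phi_{\rname{Sink}}$ demands $\pc\sqsubseteq\low$, and (b)~on reaching the immediate post-dominator that closes the region restores compatibility, because every datum whose value may differ between the two runs lies in the region and has therefore been lifted to $\high$, placing it outside the low-equivalence; as a region is single-entry/single-exit, both runs rejoin at the same junction with $\pc$ back to $\low$. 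For a call reached in low context (\mCallRule), both runs push frames whose new top is again $\InRelation$-related --- low-equivalent effective arguments, equal typing environments, copied nominal low context, identical residual stacks --- so the strictly shorter callee sub-execution falls under the induction hypothesis, while calls occurring inside a high region are subsumed by the confinement argument; in either case the call-site invariant $\SumGuard{m'}$ guarantees the callee itself does not leak.

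Finally I would close the induction at the observation. When \OutputRule fires, $\phi_{\rname{Sink}}$ forces the emitted scalar or object to have level $\low$ and the context to be $\low$, so compatibility of the emitting states makes the emitted values coincide: both runs produce the same observation $o$, and the two successor states sit at the identical location following the sink with low-equivalent $\pval$ and $\mHeapVal$, equal typing environments, equal call-stacks, nominal mode and $\pc=\low$. Hence $\qstate'_1\approx\qstate'_2$ and $\beta(\qstate'_1,\qstate'_2)$ hold, i.e.\ $\qstate'_1\InRelation\qstate'_2$, as required. The only substantial work lies in the confinement-and-reconvergence argument over control-dependence regions and in justifying that calls preserve the relation --- both resting on the invariants already set up in \figurename~\ref{fig:method-preambles-n-return} and~\ref{fig:method-calls} and on the monotone fixed-point defining the summaries in Section~\ref{sec:full-bottom-up}.
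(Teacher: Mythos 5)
Your proposal is correct, and its skeleton---induction on the length of $\qstate_1 \xrightarrow{o}_* \qstate'_1$, lock-step case analysis on the rule fired at the shared location, the sink invariant $\phi_{\rname{Sink}}$ closing the case of the final observation, the call-site guard $\SumGuard{m'}$ together with an application of the induction hypothesis to the strictly shorter suffix that starts once the new frame is pushed, and preservation of $\InRelation$ across \mExitRule by viewing frame pop/return as a sequence of non-interfering assignments---is exactly the structure of the paper's own proof. The genuine difference is one of scope and mechanism. The paper's proof treats \emph{only} the inter-procedural rules: the call case (via Lemma~\ref{lemm:assignments} applied to the frame-initialization assignments), the return case, the exit case, and a short definitional contradiction showing that an output statement cannot occur before the last transition of $\xrightarrow{o}_*$ (which you handle implicitly); every intra-procedural case, and in particular the entire high-branch/implicit-flow argument that you identify as the main obstacle, is discharged by citation to~\cite{BerthierKhakpour23Heap} rather than proved here. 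You instead inline a classic confinement-and-reconvergence argument (raise \pc to \high, no \OutputRule under high context, rejoin at the immediate post-dominator with everything writable in the region upgraded). Be aware that this is not literally the paper's mechanism: high branches are handled by the upgrade-analysis mode driven by the state variable $\uVar$ and the transformer $\TBrch{\region{}{\stm}}$, under which the upgrading of the control-dependence region is performed as a separate analysis phase rather than by concretely executing with a high \pc, so your confinement sub-lemma would need to be restated against that machinery---which is precisely what the cited prior work does. What your route buys is a self-contained proof readable without the companion paper; what the paper's route buys is brevity and no duplication of the delicate CDR argument already established for the intra-procedural fragment.
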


\begin{proof}
  We prove this by induction on the length of $\qstate_1	\xrightarrow{o}_* \qstate'_1$.
  Since, the execution does not terminate in $\qstate_1$, this means \(\qstate_1(\ell) \neq \surd \) and \(\qstate_1(\ell) = \stm;\stms  \).\\
  \textbf{Base Case} The base case (\ie the length of one) happens when $\stm$ is an output statement like \coutput l (\elts).
\item	In case of \coutput l (\elts), \plvl \elts can not be high-sensitive according to $φ_ℓ$  in the rule \OutputRule
  Hence, according to the conditions \ref{compatible.states.item:low.equiv.memory} and \ref{compatible.states.item:same.memory.typing.environment} in Definition~\ref{def::compatible.states}, the observations should be the same in both states.
  Since it does not update the state, apart from the symbolic location, the target states will remain compatible as well.

  \noindent
  \textbf{Inductive Step}  Let it hold for all executions $\qstate_1	\xrightarrow{o}_* \qstate'_1$ with the length of \(m\) or less than \(m\). We should prove the lemma for an execution with the length of  \(m+1\).
We perform a case analysis on \stm . The proof of intra-procedural statements is same as the corresponding ones  in \cite{BerthierKhakpour23Heap}. We limit ourselves to the proof of inter-procedural statements.
  \begin{enumerate}[(i)]

  \item Let \stm  be a method call \(r.m'(\lits)\).  The invariant \rname{φ-Call}  in the rule \textsc{Call}
  holds, otherwise the states would have been avoided by the synthesized guards. Therefore, according to the rule \textsc{M-Call}
  , a new call context is created and initialized.
    The initialization comprises executing a sequence of non-interfering assignments.
    We can prove that the two states after applying each assignment individually will still remain in relation according to Lemma~\ref{lemm:assignments}.
    {Since the assignments by the initialization step are non-interfering, it's trivial to show that the whole updates by all the assignments will preserve the relation. The conclusion is derived by applying the inductive step hypothesis on the final states of the frame initialization step. }

  \item
    Let \stm  be an output statement  \coutput l (\elts). Since these states have not been avoided by the security guards,  this means that the invariant  \rname{φ-Sink}  in the rule \OutputRule
    holds in $\qstate_i, i \in \{1,2\}$.
    However, observations should be produced in the last transition of this execution according to the definition of $	\xrightarrow{o}_* $. Since, this transition is not the last one, this means that \stm cannot produce any observation, \ie the invariant
     \rname{φ-Sink}
     does not hold and \stm cannot be an output statement.

  \item Let \stm be \creturn~[x]. 	Let $\qstate_1	\to t_1$  and $\qstate_2	\to t_2$.
    {The return variable \sVar is updated  to $x$ in both states,} \ie the new states will still remain in relation.

  \item Let \stm be an exit point, \ie
  \[\qstate_i = \mframe[]{\surd}{\pval_{m}}{\mHeapVal_{m}} :: \mframe[]{[a=]r.m'(w)}{\pval_{m'}}{\mHeapVal_{m'}} :: \mStack\]
  for $i \in \{1,2\}$.
    From \(\qstate_1 \InRelation \qstate_2 \) and the condition~\ref{compatible.states.item:related.stacks} in Definition~\ref{def::compatible.states}, it follows that the stack will be the same in both states, \ie  \( \qstate_1(\mStack) = \qstate_2(\mStack) \). Hence, from this and the fact that the top frame will be popped from the stack without any further changes to it, we can conclude that \(\qstate'_1(\mStack) = \qstate'_2(\mStack)\) too (I).

    According to the rule \mExitRule, \this, the reference parameters $w^{-1} (p)$, and the possible assignee $a$ will be updated.
    The typing environment of the calling method is upgraded according to the callee's effect too.
    This means a sequence of non-interfering assignments will be executed.
    Similar to the proof of Lemma~\ref{lemm:assignments}, we can prove that Eq.~\ref{compatible.states.item:same.pc.mode} and the conditions \ref{compatible.states.item:same.locations}-\ref{compatible.states.item:low.quiv.heap} in Definition~\ref{def::compatible.states} hold for the pair \((\qstate'_1 , \qstate'_2)\) (II).

    Therefore, from (I) and (II), we can conclude that $\qstate'_1 \InRelation \qstate'_2$.
  \end{enumerate}
\end{proof}

\begin{lemma}[Assignment Preserves Relation]\label{lemm:assignments}
  Let   \(\qstate_1 \InRelation \qstate_2 \) where \(\qstate_i(ℓ) = \stm;\stms, i \in \{1,2\} \) and \stm is an assignment. If $\qstate_1	\to \qstate'_1$ and $\qstate_2	\to \qstate'_2 $, then \(\qstate'_1 \InRelation \qstate'_2 \).
\end{lemma}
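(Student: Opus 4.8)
The plan is to prove the lemma by a case analysis on the syntactic form of the assignment \stm, following the eight cases of the \AssignRule rule in \figurename~\ref{fig:security-semantics-for-statements}, and to show for each that the transition \(\qstate_i \to \qstate'_i\) preserves both the compatibility of Definition~\ref{def::compatible.states} and the predicate \InRelation (\ie that both target states stay in nominal mode with \(\qstate'_i(\mode)=\ff\) and in a low context \(\qstate'_i(\pc)=\low\)).

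First I would discharge the parts that hold uniformly across all cases. An assignment is intra-procedural, so it neither pushes nor pops a frame and leaves the call-stack untouched; condition~\ref{compatible.states.item:related.stacks} is therefore inherited directly from \(\qstate_1 \InRelation \qstate_2\), and the location advances to \stms identically on both sides (condition~\ref{compatible.states.item:same.locations}). Since no assignment rule modifies \mode or \pc, both target states remain in nominal mode and low context, so \InRelation is preserved. What remains is to establish conditions~\ref{compatible.states.item:same.memory.typing.environment}--\ref{compatible.states.item:low.quiv.heap} on the two typing environments and on the low-equivalent memory and heap.

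The key observation for the typing environments is that \(\qstate_1 \approx \qstate_2\) supplies identical \VarsTypes and \HeapTypes, and that \(\pc=\low\) on both sides; every level update in \AssignRule (through \(\assignv\) and \(\nomlvl{\cdot}\)) is a function of the current typing environment and \pc alone, hence it is performed deterministically and identically in the two runs, preserving conditions~\ref{compatible.states.item:same.memory.typing.environment} and \ref{compatible.states.item:same.heap.typing.environment}. For the concrete value and heap updates I would argue by a split on the post-update security level of the target. If that level is \low, then the level expression in \AssignRule forces the right-hand side (source expression, loaded field, or copied reference) to be \low as well, so by the low-equivalence recorded in \(\qstate_1 \approx \qstate_2\) the assigned scalar value or the induced pointer coincides on both sides, preserving low-equivalence of the target; if the target level is \high, low-equivalence is vacuously maintained since it constrains only the \low-typed fragment. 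The scalar cases \(v=e\) and \(v=r.f_p\) and the simple reference cases \(r=s\), \(r=\Null\) and \(r=\cnew~c\) fit this template directly, with the fresh-allocation case handled by choosing matching new locations on the two sides so that the \low reference graphs stay isomorphic.

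The hard part will be the two store cases, and especially the reference store \(r.f_r=s\), because mutating a field can change sharing and reachability across the whole heap and thus its over-approximated heap-related relations. Here I would rely on the concrete and abstract heap-update semantics of \cite{BerthierKhakpour23Heap}: the capping level \(\nomlvl{\plvl s \sqcup \hlvl s}\) guarantees that whenever the mutated object is reachable through some \low reference the stored reference is itself \low, so the change to the \low-reachable reference graph is identical on both sides---keeping those graphs isomorphic and the primitive fields of \low references equal (the two clauses of heap low-equivalence)---whereas a mutation confined to the \high portion leaves the \low-reachable subgraph untouched; the primitive store \(r.f_p=e\) is the degenerate scalar instance of the same argument. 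Because the step-wise preservation of the intra-procedural security semantics is already established in \cite{BerthierKhakpour23Heap}, each case ultimately reduces to the corresponding argument there, augmented only by the trivial preservation of the mode/context predicate, yielding \(\qstate'_1 \InRelation \qstate'_2\).
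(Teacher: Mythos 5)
Your proposal is correct and takes essentially the same route as the paper: the paper's entire proof of this lemma is a one-line deferral to \cite{BerthierKhakpour23Heap}, where the step-preservation argument for intra-procedural statements is carried out, and your case analysis is a faithful reconstruction of that argument which likewise bottoms out in the same reference for the hard heap-store cases. Nothing in your sketch conflicts with the paper's treatment; you have simply made explicit the uniform parts (stack, location, mode, and \pc preservation) that the paper leaves implicit in the citation.
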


\begin{proof}
See \cite{BerthierKhakpour23Heap}.
\end{proof}

\newcommand{\context}{\ensuremath{\mathsf{context}}}

\newcommand{\regions}{\ensuremath{\Lambda}}

}

\end{document}
}